\def\@email#1#2{%
 \endgroup
 \patchcmd{\titleblock@produce}
  {\frontmatter@RRAPformat}
  {\frontmatter@RRAPformat{\produce@RRAP{*#1\href{mailto:#2}{#2}}}\frontmatter@RRAPformat}
  {}{}
}%
\newcommand{\ignore}[1]{}
\newcommand{\nobibentry}[1]{{\let\nocite\ignore\bibentry{#1}}}
\newcommand{\ket}[1]{\left\vert#1\right\rangle}
\newcommand{\bra}[1]{\left\langle#1\right\vert}
\def\ketbra#1#2{{\vert#1\rangle\!\langle#2\vert}}
\DeclareMathAlphabet{\mathcal}{OMS}{cmsy}{m}{n}
\theoremstyle{definition}
\theoremstyle{plain}
\theoremstyle{plain}
\newtheorem{definition}{Definition}[section]
\newtheorem{proposition}{Proposition}[section]
\newtheorem{lemma}{Lemma}[section]
\newtheorem{theorem}{Theorem}[section]
\newtheorem{observation}{Observation}[section]
\newtheorem{corollary}{Corollary}[section]
\newtheorem{conjecture}{Conjecture}[section]
\newcommand{\be}{\begin{equation}} 
\newcommand{\bes}{\begin{equation*}} 			% math environments
\newcommand{\ee}{\end{equation}}
\newcommand{\ees}{\end{equation*}}
\newcommand{\bematrix}{\left(\begin{matrix}}
\newcommand{\ematrix}{\end{matrix}\right)}
\begin{document}

\setcitestyle{numbers}
\setcitestyle{square}
\preprint{AIP/123-QED}

\title[Multipartite entanglement in the diagonal symmetric subspace]{Multipartite entanglement in the diagonal symmetric subspace}
% Force line breaks with \\

\author{Jordi Romero-Pallej\`{a}}
 \affiliation{F\'isica Te\`{o}rica: Informaci\'o i Fen\`{o}mens Qu\`{a}ntics.  Departament de F\'isica, Universitat Aut\`{o}noma de Barcelona, 08193 Bellaterra, Spain}
 \email{jordi.romero.palleja@uab.cat}

\author{Jennifer Ahiable}
 \affiliation{F\'isica Te\`{o}rica: Informaci\'o i Fen\`{o}mens Qu\`{a}ntics.  Departament de F\'isica, Universitat Aut\`{o}noma de Barcelona, 08193 Bellaterra, Spain}

\author{Alessandro Romancino}
 \affiliation{Dipartimento di Fisica e Chimica ``E. Segrè'', Università degli Studi di Palermo, Via Archirafi 36, 90123 Palermo, Italy}

\author{Carlo Marconi}
 \affiliation{Istituto Nazionale di Ottica (CNR-INO), Largo Enrico Fermi 6, 50125 Firenze, Italy}
 
\author{Anna Sanpera}
 \affiliation{F\'isica Te\`{o}rica: Informaci\'o i Fen\`{o}mens Qu\`{a}ntics.  Departament de F\'isica, Universitat Aut\`{o}noma de Barcelona, 08193 Bellaterra, Spain}
\affiliation{ICREA, Pg. Llu\'is Companys 23, 08010 Barcelona, Spain}

\date{\today}% It is always \today, today,
             %  but any date may be explicitly specified

\begin{abstract}
We investigate the entanglement properties in the symmetric subspace of $N$-partite $d$-dimensional systems (qudits). As it happens already for bipartite diagonal symmetric states, also in the multipartite case the local dimension $d$ plays a crucial role. Here, we demonstrate that there is no bound entanglement for $d = 3,4 $ and $N = 3$. Using different techniques, we present strong analytical evidence that no bound entanglement exist for any $N$ if $d\leq 4$. Interestingly, bound entanglement of diagonal symmetric states exist for any number of parties, $N\geq 2$, and local dimensions $d\geq 5$.
\end{abstract}

\maketitle

\section{\label{sec:level1}Introduction}
\noindent Symmetry and entanglement emerge as foundational concepts that profoundly shape our understanding of Nature. While symmetry reflects the invariance of physical systems under transformations, entanglement defines the role of genuine quantum correlations in composite systems. Here we unify these notions to reveal the role of symmetry in reducing the complexity of entanglement characterisation. 

Over the years, entanglement has been exploited as a resource for many information tasks which would, otherwise, be unachievable (see, e.g., \onlinecite{horodecki2009quantum} and references therein,  \onlinecite{jozsa2003role,yin2017satellite,epping2017multi}). 
Consequently, the ability to access this resource in quantum states is a dire need for both theorists and experimentalists. Significant advances have been made in the characterisation and quantification of entanglement, despite the considerable challenges it presents. In fact, the separability problem, that is, determining whether a given quantum state is separable or not, is recognised as a difficult task, falling, in general, under the category of NP-hard problems \onlinecite{gurvits2003classical}. 
Nevertheless, several methods have been developed to detect entanglement in specific quantum states, particularly in the case of bipartite systems. So far, the most powerful criterion for separability is based on the positivity of the partial transpose (PPT) of a quantum state \onlinecite{peres1996separability}. The so called PPT-criterion becomes necessary and sufficient when the composite Hilbert space, $\mathcal{H}_{AB}$, has dimension $\mbox{dim}(\mathcal{H}_{AB})\leq 6$ \onlinecite{Horodecki_1996}. While states with a negative partial transposition are always entangled, the converse is not true: such states are known as PPT-entangled states (PPTES). The complete characterization of PPTES is probably unfeasible, but uncovering their structure, in some large families, especially when dealing with multipartite systems (whose larger Hilbert space inevitably hinders the possibility of a simple characterisation) has consequences for understanding  has an intrinsic importance whose larger Hilbert space inevitably hinders the possibility of a simple characterisation, 

Symmetric states, particularly in the bipartite scenario, have been previously addressed in \onlinecite{toth2009entanglement, PhysRevA.99.032312, PhysRevA.104.016401}, even though much work has also been directed towards significantly larger systems \onlinecite{PhysRevA.67.022112}. In this work, we investigate multipartite diagonal symmetric states (DS), which correspond to mixtures of projectors onto symmetric pure states. While for bipartite DS states it is known under which conditions being PPT is necessary and sufficient to ensure separability \onlinecite{wolfe2014certifying,yu2016separability,tura2018separability}, this is not the case in the multipartite scenario, where PPT has to be checked with respect to all partitions and no sufficient conditions to ensure separability are known yet.  
%Here, we turn our attention to such symmetric multipartite case, i.e., $\rho \in \mathcal{B}(\mathcal{S}((\mathbb{C}^d)^{\otimes N}))$, where $\mathcal{S}$ denotes the symmetric subspace of $N$ parties of local dimension $d$. \\

Before proceeding further, let us present our main findings: first, we demonstrate that for DS states with $N=3$, PPT ensures separability for $d=3,4$, while for $d>4$, we show that this is no longer the case. Second, we analyse multipartite DS states for a generic number of parties, $N$, and disclose the structure of their partial transpositions. Further, for $N=2n$ ($n>1$),  we map DS states in $\mathcal{S} ((\mathbb{C}^d)^{\otimes N})$ to \textit{bipartite} symmetric states in $\mathcal{S}(\mathbb{C}^m\otimes \mathbb{C}^m)$, with $m>d$.
With the aid of this technique, we infer the entanglement properties of DS states from their corresponding bipartite symmetric states, thus providing new insights on the structure of entanglement in multipartite systems.\\

Our manuscript is organised as follows: in Section \ref{sec:2}, we introduce the main concepts and tools needed to address our problem and review previous results on which our ideas are built. In Section \ref{sec:3}, we focus on entanglement in the multipartite scenario for arbitrary dimension, and we discuss the consequences that emerge from imposing PPT conditions for $N=3$. Then, we move to the generic case of $N$ parties and demonstrate that imposing PPT leads to a set of conditions that do not allow to ensure separability. In Section \ref{sec:4}, we present an embedding that allows a significant simplification of the separability problem and illustrate our findings with an example. Remarkably, in Section \ref{sec:5}, we introduce a similar technique, based on DS states of qubits, which allows to the study symmetric bipartite states. Finally, we summarise our findings and present some open questions in the conclusions. 

\section{\label{sec:2}Mathematical preliminaries}

\subsection{The symmetric subspace}

\noindent Let $\mathcal{H}=\mathcal{H}_1\otimes \cdots \otimes \mathcal{H}_N$ be a finite dimensional $N$-partite Hilbert space, where $\mathcal{H}_i=\mathbb C^d$ is the Hilbert space associated to the $i$-th party. The symmetric subspace, $\mathcal {S}(\mathcal{H}) \subset\mathcal{H}$, corresponds to the convex set formed by the normalized pure states, $\ket{\Psi_S}\in\mathcal{H}$, that remain invariant under any permutation of the parties. In an abuse of language, we denote by symmetric quantum states, $\rho_S\in \mathcal{B}(\mathcal {S}(\mathcal{H}))$, the convex hull of projectors onto pure symmetric normalised states, i.e.,
\begin{equation}
\rho_{S} =\sum_{k} p^{(k)}_{S} \ketbra{\Psi^{(k)}_S}{\Psi^{(k)}_S}~,
\end{equation} 
with $p^{(k)}_{S} \geq 0~, \sum_{k} p^{(k)}_{S} =1$. Thus, any $\rho_S\in \mathcal{B}(\mathcal {S}(\mathcal{H}))$ is a positive semidefinite operator ($\rho_S\succeq 0$) with unit trace ($\mathrm{Tr} (\rho_S) = 1$), fulfilling the condition
$\Pi_{S} \rho_{S} \Pi_{S} = \Pi_{S} \rho_{S}=\rho_{S} \Pi_{S}=\rho_{S}$, where $\Pi_S$ is the projector onto the symmetric subspace. The dimension of this subspace is $\binom{N+d-1}{d-1}$, which is significantly smaller than the dimension of the Hilbert space $\mathcal{H}$, i.e., $\mbox{dim} (\mathcal H)=d^N$. Let us recall the following result about separability of symmetric states \onlinecite{ichikawa2008exchange}:
\begin{observation}
    Let $\mathcal{H}= (\mathbb{C}^{d})^{\otimes N}$ be a multipartite Hilbert space. A symmetric state $\rho_S\in \mathcal{B}(\mathcal {S}(\mathcal{H}))$ is separable if it can be written as a convex combination of projectors onto symmetric product states, i.e.,
    \begin{equation}
        \rho_{S}=\sum_i p_i \ketbra{e_i e_i \cdots e_i}{e_i e_i\cdots e_i}.
    \end{equation}
\end{observation}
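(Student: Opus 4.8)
The observation contains a trivial direction and a substantive one, and the plan is to make the trivial part explicit and then concentrate on the content that actually requires an argument. If $\rho_S=\sum_i p_i\,\ketbra{e_i\cdots e_i}{e_i\cdots e_i}$ with $p_i\geq 0$ and $\sum_i p_i=1$, then $\rho_S$ is manifestly a convex combination of fully product vectors and hence separable; nothing more is needed in that direction. The part worth proving is the converse: that a symmetric state which is separable necessarily admits a decomposition in which \emph{every} product vector has all its local factors equal. This is the statement I would establish, since it is what makes the observation useful for reducing the separability problem on $\mathcal{S}(\mathcal{H})$.

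The whole argument rests on one structural fact: since $\Pi_S\rho_S\Pi_S=\rho_S$, the range of $\rho_S$ is contained in the symmetric subspace $\mathcal{S}(\mathcal{H})=\mathrm{Im}(\Pi_S)$. First I would combine this with the elementary observation that in any convex decomposition $\rho_S=\sum_i p_i\ketbra{\phi_i}{\phi_i}$ with $p_i>0$, each $\ket{\phi_i}$ lies in $\mathrm{range}(\rho_S)$: indeed, any $\ket{\chi}\in\ker(\rho_S)$ gives $0=\bra{\chi}\rho_S\ket{\chi}=\sum_i p_i|\langle\chi|\phi_i\rangle|^2$, so $\langle\chi|\phi_i\rangle=0$ and therefore $\ket{\phi_i}\perp\ker(\rho_S)$. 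Taking now a \emph{separable} decomposition, whose vectors are product states $\ket{\phi_i}=\ket{a_i^{(1)}}\otimes\cdots\otimes\ket{a_i^{(N)}}$, I conclude that each of these product vectors already belongs to $\mathcal{S}(\mathcal{H})$.

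The final step is a lemma on product vectors in the symmetric subspace: a permutation-invariant product vector must have all of its factors proportional. I would prove it by applying the transposition that exchanges factors $j$ and $k$; invariance forces $\ket{a^{(j)}}\otimes\ket{a^{(k)}}=\ket{a^{(k)}}\otimes\ket{a^{(j)}}$, whose antisymmetric component $\ket{a^{(j)}}\wedge\ket{a^{(k)}}$ vanishes, so $\ket{a^{(j)}}\propto\ket{a^{(k)}}$. Running over all pairs shows every factor is proportional to a common vector $\ket{e_i}$, hence $\ket{\phi_i}\propto\ket{e_i}^{\otimes N}$; absorbing the normalisation constants into the weights preserves $p_i\geq 0$ and $\sum_i p_i=1$, and substituting back yields exactly $\rho_S=\sum_i p_i\,\ketbra{e_i\cdots e_i}{e_i\cdots e_i}$.

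The point requiring the most care — the main obstacle, such as it is — is the claim that the product vectors of an arbitrary separable decomposition automatically sit in $\mathcal{S}(\mathcal{H})$. This is a statement about the range of $\rho_S$, and it is essential that we decompose $\rho_S$ itself rather than symmetrise a decomposition of some non-symmetric operator: the symmetrisation $\Pi_S(\ket{a}\bra{a}\otimes\cdots)\Pi_S$ of a product projector is generally not a product projector, so that route does not work. Once the range containment is in hand, the remaining steps are routine linear algebra.
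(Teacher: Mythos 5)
Your proposal is correct, but there is nothing in the paper to compare it against: the Observation is not proved there at all, it is recalled from the cited reference (Ichikawa et al.), so your argument supplies a proof the paper omits. Your reading of the statement is also the right one. As literally worded it asserts only the trivial direction (a convex mixture of projectors onto states $\ket{e_i}^{\otimes N}$ is manifestly separable), but the content the paper actually uses later is the converse: in Appendix A, the proof of Theorem \ref{mr1} starts from the claim that any separable DS state admits a decomposition $\sum_i \lambda_i \ketbra{e_i}{e_i}^{\otimes 3}$, which is precisely the direction you prove. Your argument is the standard one and is complete: in any convex decomposition $\rho_S=\sum_i p_i\ketbra{\phi_i}{\phi_i}$ the vectors $\ket{\phi_i}$ lie in $\mathrm{range}(\rho_S)$, which is contained in $\mathcal{S}(\mathcal{H})$ because $\Pi_S\rho_S=\rho_S$; and a product vector invariant under every transposition must have all local factors proportional, hence be of the form $\ket{e}^{\otimes N}$ up to a phase that the projector kills. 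The only step you state slightly loosely is deducing $\ket{a^{(j)}}\otimes\ket{a^{(k)}}=\ket{a^{(k)}}\otimes\ket{a^{(j)}}$ from invariance of the full $N$-fold product: strictly one should first contract the remaining $N-2$ slots with $\bra{a^{(m)}}$ (each having nonzero overlap with its own factor) to reduce to the two-slot identity before taking the antisymmetric part; this is routine and does not affect correctness.
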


\noindent A convenient basis for the symmetric subspace is given by the Dicke states which correspond to equally weighted superpositions of states with the same number of excitations. As an example, for a system of two qubits, the corresponding Dicke basis reads $\{\ket{00},\frac{1}{\sqrt{2}}(\ket{01}+\ket{10}),\ket{11}\}$, where it is evident that such states form an orthonormal basis. In the general case of multipartite states of local dimension $d$, Dicke states are defined as follows:

\begin{definition}%\textcolor{red}{Referee 2 asks for a referenco for Dicke states being an orthonormal basis. Needed?}
    The Dicke state $\ket{D_{\mathbf{k}}}$ corresponds to superpositions of  $k_0$ qudits in the state $\ket{0}$, $k_1$ qudits in the state $\ket{1}$, etc, and is expressed as 
    \begin{equation}
        \ket{D_{\mathbf{k}}}=\mathcal{C}(N,\mathbf{k})^{-1/2} \sum_{\pi \in \mathcal{G}_N}\pi(\ket{0}^{\otimes k_0}\;\otimes\cdots \otimes\;\ket{d-1}^{\otimes k_{d-1}}),
    \end{equation}
   where $\mathcal{G}_{N}$ is the group of permutations of $N$ elements, $\pi$ is a permutation operator, $\mathbf{k}=(k_0,k_1,\cdots,k_{d-1})$ is a partition of $N$, i.e., $k_i \geq 0, \; \sum^{d-1}_{i=0} k_i = N$, and  $\mathcal{C}(N,\mathbf{k})$ is a normalisation constant given by
    \begin{equation}
        \mathcal{C}(N,\mathbf{k})=\binom{N}{\mathbf{k}}=\frac{N!}{k_0! k_1! \cdots k_{d-1}!}.
    \end{equation}
\end{definition}

\noindent Thus, a symmetric quantum state can be compactly 
expressed in the Dicke basis as follows: 
\begin{definition}
 \label{def:multiSS}
Any symmetric state,  $\rho_{S}\in \mathcal{B}(\mathcal{S}((\mathbb{C}^d)^{\otimes N}))$, can be written as 
\begin{equation}
\rho_{S} = \sum_{\mathbf{k}\mathbf{k'}}
\alpha_{\mathbf{k}}^{\mathbf{k'}}
\ketbra{D_{\mathbf{k}}}{D_{\mathbf{k'}}} +\emph{h.c.}
\end{equation}
with $ \alpha_{\mathbf{k}}^{\mathbf{k'}} \in \mathbb{C}$, where $\mathbf{k},\mathbf{k'} $ are partitions of $N$.
\end{definition}

\noindent Convex mixtures of projectors onto Dicke states are denoted as diagonal symmetric (DS), since they are diagonal in the Dicke basis. They form a convex subset of $\mathcal{S}$ and are particularly relevant for our analysis.
\begin{definition}
 \label{def:multiDSS}
Any DS state, $\rho_{DS} \in \mathcal{B}(\mathcal{S} ((\mathbb{C}^d)^{\otimes N}))$, is of  the form
 \begin{equation}
  \rho_{DS} = \sum_{\mathbf{k}} p_{\mathbf{k}}\ketbra{D_{\mathbf{k}}}{D_{\mathbf{k}}},
\end{equation}
where $\mathbf{k}$ is a partition of $N$ and $p_{\mathbf{k}} \geq 0, \; \forall \;\mathbf{k}$, $\sum_{\mathbf{k}} p_{\mathbf{k}}=1$.
\end{definition}

\noindent Although in the following we will focus mainly on the multipartite scenario, we briefly recall some basic definitions regarding the symmetric bipartite space of two qudits, i.e., $\mathcal S(\mathbb{C}^{d}\otimes \mathbb{C}^{d})$.  
In this case, the Dicke states take the simpler expression,
\begin{equation}
\label{dick2}
\ket{D^{(d)}_{ii}} = \ket{ii}~, \quad \ket{D^{(d)}_{ij}} = \frac{\ket{ij} + \ket{ji}}{\sqrt{2}}~, \quad   i \neq j~,
\end{equation}
where the superscript reminds that $\{\ket{i}\}_{i=0}^{d-1}$ is an orthonormal basis of $\mathbb{C}^{d}$. Notice that the dimension of $\mathcal{S} (\mathbb{C}^d\otimes \mathbb{C}^d)$ is now  $d(d+1)/2$. 

\noindent Equivalently, bipartite symmetric quantum states can be compactly 
expressed as follows: 
\begin{definition}
 \label{def:biSS}
Any bipartite symmetric state,  $\rho_{S}\in \mathcal{B}(\mathcal{S}(\mathbb{C}^d\otimes \mathbb{C}^d))$, can be written as 
\begin{equation}
\rho_{S} = \sum_{\substack{0\leq i\leq j<d\\
                  0\leq k\leq l<d}}
               \left(\rho_{ij}^{kl} \ketbra{D_{ij}^{(d)}}{D_{kl}^{(d)}} + \emph{h.c.} \right)~,
\end{equation}
with $ \rho_{ij}^{kl} \in \mathbb{C}$. Notice that, due to the symmetry of the Dicke states, it holds that $ \rho_{ij}^{kl} = \rho_{ji}^{kl}= \rho_{ij}^{lk}= \rho_{ji}^{lk} ~\forall \;i,j,k,l$.
\end{definition}

\begin{definition}
 \label{def:biDS}
Any DS state, $\rho_{DS} \in \mathcal{B}(\mathcal{S}(\mathbb{C}^d\otimes \mathbb{C}^d))$, is of  the form
\begin{equation}
  \rho_{DS} = \sum_{0 \leq i \leq j < d} p_{ij}\ketbra{D^{(d)}_{ij}}{D^{(d)}_{ij}},
\end{equation}
with $p_{ij} \geq 0,\; \forall\; i,j$ and $\sum_{ij} p_{ij}=1$.
\end{definition}

\noindent Due to their intrinsic structure, symmetric states possess a natural decomposition:
\begin{definition}
 \label{decsym}
 Every symmetric state, $\rho_{S}$, can be decomposed as $\rho_{S} = \rho_{DS} + \sigma_{CS}$, where $\rho_{DS}$ is a DS state and $\sigma_{CS}$ is a traceless symmetric contribution which contains the coherences between Dicke states. In the bipartite case, this decomposition reads
\begin{equation}
 \label{sym}
 \rho_{S} = \rho_{DS}+\sigma_{CS}
 =\sum_{0\le i \le j < d} p_{ij} \ketbra{D^{(d)}_{ij}}{D^{(d)}_{ij}} + \underset{{(i,j)\neq(k,l)}}\sum\left( {\alpha_{ij}^{kl}} \ketbra{D^{(d)}_{ij}}{D^{(d)}_{kl}} + \emph{h.c.} \right),
\end{equation}
\noindent with  ${\alpha_{ij}^{kl}} \in \mathbb{C}$ and ${\alpha_{ij}^{kl}} = ({\alpha_{kl}^{ij}})^*$. 

\end{definition}

\subsection{Entanglement in bipartite DS states}
    
\noindent Previous works have shown that for bipartite DS states of dimensions $d=3$, and $d=4$, separability is equivalent to PPT  \onlinecite{yu2016separability,tura2018separability}. Below, we review the main steps leading to such a result, as they will become pivotal in the  multipartite case. The crucial point is to realize that despite the partial transpose of a symmetric state in $ \mathcal{S(}\mathbb{C}^d\otimes \mathbb{C}^d$ is not anymore symmetric and spans the full  Hilbert space ($d^2\times d^2$), for diagonal symmetric states the conditions for PPT are encoded in a reduced matrix of size $d\times d$.

\begin{proposition}
    \label{mdef} 
 Let $\rho_{DS} \in  \mathcal{B}(\mathcal{S(}\mathbb{C}^d\otimes \mathbb{C}^d))$. Then, its partial transpose in the computational basis reads
 \begin{equation}
        \rho_{DS}^{\Gamma}=M_d(\rho_{DS}) \bigoplus_{\substack{{0\leq i \neq j <d}}}\frac{p_{ij}}{2}~,
        \label{eqmdef}
    \end{equation}
    where $M_d(\rho_{DS})$ is a $d \times d$ matrix with non-negative entries in the subspace spanned by  $\{\ket{00},\ket{11}\dots,\ket{dd}\}$ %\emph{\onlinecite{tura2018separability}}}.
    \begin{equation}
        M_d(\rho_{DS})=
         \begin{pmatrix}
        \bar{p}_{00} & \bar{p}_{01} & \cdots&\bar{p}_{0 d-1}\\
        \bar{p}_{01} & \bar{p}_{11} & \cdots&\bar{p}_{1 d-1} \\
        \vdots&\vdots&\ddots&\vdots\\
         \bar{p}_{0 d-1} & \bar{p}_{1 d-1} & \cdots& \bar{p}_{d-1 d-1}
    \end{pmatrix} 
    \label{mbipartit}
    \end{equation}
    and $\bar{p}_{ij}=\frac{p_{ij}}{{N}_{ij}}$ and ${N}_{ij}$ are the normalisation constant arising from Dicke state $\ket{D_{ij}}$, i.e.,
 {\begin{align*}
    N_{ij}= \left \{ \begin{array}{rl}
    1 \quad \mbox{for} \quad i=j~,\\
    2 \quad \mbox{for} \quad i\neq j~.
    \end{array}
    \right.
\end{align*} }
\end{proposition}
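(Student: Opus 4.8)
The plan is to compute $\rho_{DS}^{\Gamma}$ explicitly by expanding every Dicke projector in the computational basis and transposing dyad by dyad. Taking the partial transpose on the second subsystem, a general dyad transforms as $\ketbra{ab}{cd}^{\Gamma}=\ketbra{ad}{cb}$, so the first thing I would do is apply this rule to the two types of projectors appearing in $\rho_{DS}$. The diagonal Dicke projectors $\ketbra{D^{(d)}_{ii}}{D^{(d)}_{ii}}=\ketbra{ii}{ii}$ are manifestly invariant under $\Gamma$. For $i<j$ I would expand
\begin{equation*}
\ketbra{D^{(d)}_{ij}}{D^{(d)}_{ij}}=\tfrac12\left(\ketbra{ij}{ij}+\ketbra{ij}{ji}+\ketbra{ji}{ij}+\ketbra{ji}{ji}\right),
\end{equation*}
and observe that the two ``population'' terms $\ketbra{ij}{ij}$ and $\ketbra{ji}{ji}$ are left invariant, whereas the two ``coherence'' terms are sent onto the computational diagonal, namely $\ketbra{ij}{ji}^{\Gamma}=\ketbra{ii}{jj}$ and $\ketbra{ji}{ij}^{\Gamma}=\ketbra{jj}{ii}$. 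This single observation is the conceptual heart of the result.

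With these rules in hand, the second step is to collect all contributions to $\rho_{DS}^{\Gamma}$ according to which basis vectors they connect. The terms supported on the diagonal subspace $\mathrm{span}\{\ket{ii}\}_{i=0}^{d-1}$ are $p_{ii}\ketbra{ii}{ii}$ together with the images $\tfrac{p_{ij}}{2}(\ketbra{ii}{jj}+\ketbra{jj}{ii})$ for each $i<j$; identifying $\ket{ii}$ with the $i$-th basis vector of $\mathbb{C}^d$, these assemble precisely into the symmetric $d\times d$ matrix $M_d(\rho_{DS})$ with diagonal entries $p_{ii}=\bar p_{ii}$ and off-diagonal entries $\tfrac{p_{ij}}{2}=\bar p_{ij}$, reproducing \eqref{mbipartit}. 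The remaining terms are the invariant population dyads $\tfrac{p_{ij}}{2}\ketbra{ij}{ij}$, one for every ordered pair $i\neq j$, each of which is a $1\times1$ block of value $p_{ij}/2$ on the one-dimensional subspace $\mathrm{span}\{\ket{ij}\}$.

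Finally I would verify that this is a genuine orthogonal direct sum and check the accounting. The subspace $\mathrm{span}\{\ket{ii}\}$ and the lines $\mathrm{span}\{\ket{ij}\}$ for $i\neq j$ are mutually orthogonal and exhaust the computational basis, and their dimensions add up as $d+d(d-1)=d^2=\dim\mathcal{H}$, so no cross terms between blocks can survive. Non-negativity of the entries of $M_d$ and of the $1\times1$ blocks is immediate from $p_{ij}\ge 0$. I do not anticipate a genuine obstacle here: the only points requiring care are fixing the partial-transpose convention consistently and tracking how the $\tfrac12$ normalisation of the off-diagonal Dicke states combines with $p_{ij}$ to give $\bar p_{ij}=p_{ij}/N_{ij}$. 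The block-diagonal structure then follows automatically from the fact that $\Gamma$ maps the intra-projector coherences $\ketbra{ij}{ji}$ exactly onto coherences $\ketbra{ii}{jj}$ within the diagonal subspace, leaving the populations untouched.
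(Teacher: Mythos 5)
Your proof is correct and complete: the dyad-by-dyad partial transposition rule $\ketbra{ab}{cd}^{\Gamma}=\ketbra{ad}{cb}$, the observation that the intra-projector coherences $\ketbra{ij}{ji}$ land on the diagonal subspace $\mathrm{span}\{\ket{ii}\}$ while the populations $\ketbra{ij}{ij}$ stay put, and the dimension count $d+d(d-1)=d^2$ together establish exactly the block structure of Eq.~(\ref{eqmdef}). The paper itself states this proposition without proof (citing prior work), but your argument is the standard one and coincides in spirit with the computation the paper does carry out for the tripartite analogue in Appendix~\ref{p2}, where $\rho_{DS}^{\Gamma}$ is likewise expanded in the computational basis and reorganised into a direct sum of the matrices $M_i(\rho_{DS})$ plus $1\times 1$ blocks.
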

\noindent Clearly, positivity of  $\rho_{DS}^{\Gamma}$ relies on the positivity of  $M_d(\rho_{DS})$. Notice also that due to symmetry, it is irrelevant w.r.t. which party the partial transposition is performed. Now, the following matrix algebra definitions are needed. 
%Using some well stablished notions of matrix algebra
%Before describing the known results, the following definitions are needed.
\begin{definition}
    A $d \times d$ matrix, M, is said completely positive if and only if there exists a non-negative $d \times k$ matrix, $C$, such that $M=CC^T$, for some $k\geq1$. This set of matrices forms a convex cone, dubbed $\mathcal{CP}_d$. 
\end{definition}

\begin{definition}
    A $d \times d$ matrix, $M$, is said to be doubly non-negative if and only if it is positive semidefinite and its entries are non-negative. This set of matrices forms a convex cone, dubbed $\mathcal{DNN}_d$.
\end{definition}

\noindent With the aid of the above definitions, the following theorems follow swiftly:
\begin{theorem}[\onlinecite{yu2016separability,tura2018separability}]
\label{CP}
     Let $\rho_{DS} \in  \mathcal{B}(\mathcal{S(}\mathbb{C}^d\otimes \mathbb{C}^d))$ be a DS state. Then,
%   \begin{equation}
      $\rho_{DS}\text{ separable} \iff M_d(\rho_{DS}) \in \mathcal{CP}_d$.
%    \end{equation}
\end{theorem}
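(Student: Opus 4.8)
The plan is to prove both implications by reducing everything to the Dicke-basis diagonal of an elementary symmetric product projector $\ketbra{ee}{ee}$. Writing $\ket{e}=\sum_i c_i\ket{i}$, a direct expansion gives $\langle D^{(d)}_{ii}|ee\rangle = c_i^2$ and $\langle D^{(d)}_{ij}|ee\rangle=\sqrt{2}\,c_i c_j$ for $i<j$, so that the diagonal entries of $\ketbra{ee}{ee}$ in the Dicke basis are $|c_i|^4$ on $\ket{D^{(d)}_{ii}}$ and $2|c_i|^2|c_j|^2$ on $\ket{D^{(d)}_{ij}}$. This single computation is the bridge between a separable decomposition and the entries $\bar{p}_{ij}=p_{ij}/N_{ij}$ of $M_d(\rho_{DS})$, and both directions are essentially bookkeeping on top of it.

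For the forward implication ($\rho_{DS}$ separable $\Rightarrow M_d\in\mathcal{CP}_d$) I would invoke the separability criterion for symmetric states recalled above, so that $\rho_{DS}=\sum_a w_a\ketbra{e_a e_a}{e_a e_a}$ with $\ket{e_a}=\sum_i c^{(a)}_i\ket{i}$ and $w_a\geq 0$. Reading off the Dicke diagonal with the formula above yields $p_{ii}=\sum_a w_a|c^{(a)}_i|^4$ and $p_{ij}=2\sum_a w_a|c^{(a)}_i|^2|c^{(a)}_j|^2$, hence $\bar{p}_{ij}=\sum_a w_a|c^{(a)}_i|^2|c^{(a)}_j|^2$ for all $i,j$. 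Setting $b^{(a)}_i=\sqrt{w_a}\,|c^{(a)}_i|^2\geq 0$ gives $M_d=BB^T$ with $B=(b^{(a)}_i)$ entrywise non-negative, i.e.\ $M_d\in\mathcal{CP}_d$ by definition.

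For the converse ($M_d\in\mathcal{CP}_d \Rightarrow \rho_{DS}$ separable) I would start from a factorization $M_d=CC^T$ with $C=(c^{(a)}_i)\geq 0$ (columns indexed by $a$) and, since all $c^{(a)}_i\geq 0$, define candidate product states $\ket{e_a}=w_a^{-1/4}\sum_i\sqrt{c^{(a)}_i}\,\ket{i}$ with weights $w_a=(\sum_i c^{(a)}_i)^2$. A short calculation using the diagonal formula shows that $\sum_a w_a\ketbra{e_a e_a}{e_a e_a}$ reproduces exactly the Dicke diagonal $\{p_{ij}\}$ of $\rho_{DS}$, and that $\sum_a w_a=\mathrm{Tr}\,\rho_{DS}=1$ automatically. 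This state is separable but still carries off-diagonal Dicke coherences; to remove them I would average over the diagonal local unitaries $U_\phi^{\otimes 2}$ with $U_\phi=\mathrm{diag}(e^{i\phi_0},\dots,e^{i\phi_{d-1}})$, which leaves the populations untouched, annihilates every coherence $\ketbra{D^{(d)}_{ij}}{D^{(d)}_{kl}}$ with $\{i,j\}\neq\{k,l\}$, and maps separable states to separable states. The averaged state is then precisely $\rho_{DS}$, establishing separability.

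The two genuinely delicate points are: (i) justifying in the forward direction that separability of a symmetric state forces a \emph{symmetric} product decomposition, for which I would lean on the observation on symmetric separability recalled above; and (ii) the dephasing step in the converse, which I expect to be the main obstacle. There one must verify carefully that integrating over the diagonal phases annihilates all Dicke coherences while preserving every population, and that the resulting map, being a convex combination of local-unitary conjugations, cannot create entanglement. Once these are in place both implications close, and it is exactly the entrywise non-negativity of the $c^{(a)}_i$ that upgrades the bare positive semidefiniteness of $M_d$ to genuine membership in $\mathcal{CP}_d$.
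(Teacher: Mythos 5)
Your proof is correct. Note first that the paper itself does not prove Theorem \ref{CP}; it imports it from Refs.~\onlinecite{yu2016separability,tura2018separability}. The natural comparison point is therefore the paper's own proof of the tripartite analogue, Theorem \ref{mr1}, in Appendix \ref{p1}, which is built on exactly the bipartite argument you give. Your forward direction coincides with it: both invoke the fact that a separable symmetric state admits a decomposition into symmetric product projectors, read off the Dicke diagonal of $\ketbra{ee}{ee}$, and factor $M_d(\rho_{DS})=BB^{T}$ with $B$ entrywise non-negative. In the converse, both you and the paper turn the columns of a non-negative factorization into product vectors whose amplitudes are square roots of the matrix entries, and then destroy the unwanted Dicke coherences by dephasing; the only genuine difference is how that dephasing is implemented. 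You average over the continuous family of diagonal local unitaries $U_\phi^{\otimes 2}$, which annihilates every coherence $\ketbra{D^{(d)}_{ij}}{D^{(d)}_{kl}}$ with $\{i,j\}\neq\{k,l\}$ while preserving populations; to conclude you then need the (true, but worth stating explicitly) fact that the separable set is convex and closed, so that an integral of separable states is again separable --- or you may discretize the average. The paper's proof performs precisely that discretization from the outset: it dresses the amplitudes with signs $(-1)^{\mathbf{k}_l}$ and powers $w^{sl}$ of a $2d$-th root of unity and sums over the finitely many labels $(\mathbf{k},s)$, as in Eqs.~(\ref{chaosdee})--(\ref{fulldickedemo}), which produces an explicit finite separable decomposition with no limiting argument. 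So the two routes are the continuous and discrete versions of the same idea: yours is slightly cleaner to state, the paper's is fully constructive; and in both it is exactly the entrywise non-negativity of the factor, not mere positive semidefiniteness, that makes the construction work.
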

\begin{theorem}[\onlinecite{tura2018separability}]
    Let $\rho_{DS} \in  \mathcal{B}(\mathcal{S(}\mathbb{C}^d\otimes \mathbb{C}^d))$ be a DS state. Then,
%  \label{DNN}
%    \begin{equation}
$      \rho_{DS}^{\Gamma}\geq 0 \;  \iff M_d(\rho_{DS}) \in \mathcal{DNN}_d.$
%    \end{equation}
\end{theorem}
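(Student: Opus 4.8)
The plan is to read the result directly off the block structure already established in Proposition~\ref{mdef}, so that the theorem becomes an almost immediate corollary. First I would invoke the decomposition
$\rho_{DS}^{\Gamma}=M_d(\rho_{DS}) \bigoplus_{\substack{0\leq i \neq j <d}}\frac{p_{ij}}{2}$,
which exhibits the partial transpose as a direct sum of the $d\times d$ block $M_d(\rho_{DS})$, acting on $\mathrm{span}\{\ket{00},\ket{11},\dots,\ket{d-1\,d-1}\}$, together with a collection of $1\times1$ blocks $p_{ij}/2$, one for each ordered pair $i\neq j$. Since the spectrum of a block-diagonal operator is the union of the spectra of its blocks, $\rho_{DS}^{\Gamma}\succeq 0$ holds if and only if every block is positive semidefinite, i.e. if and only if $M_d(\rho_{DS})\succeq 0$ and $p_{ij}/2\geq 0$ for all $i\neq j$.

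Next I would dispatch the scalar blocks. Because $\rho_{DS}$ is a DS state, Definition~\ref{def:biDS} guarantees $p_{ij}\geq 0$ for all $i,j$, so each scalar block $p_{ij}/2$ is automatically non-negative and imposes no constraint. Consequently the positivity of $\rho_{DS}^{\Gamma}$ is governed entirely by the single block $M_d(\rho_{DS})$, and the problem reduces to the equivalence $\rho_{DS}^{\Gamma}\succeq 0 \iff M_d(\rho_{DS})\succeq 0$.

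Finally I would connect positive semidefiniteness to membership in $\mathcal{DNN}_d$. By definition, $M_d$ lies in $\mathcal{DNN}_d$ precisely when it is positive semidefinite \emph{and} entrywise non-negative. But the entries of $M_d$ are $\bar{p}_{ij}=p_{ij}/N_{ij}$ with $N_{ij}\in\{1,2\}$, hence non-negative as soon as $p_{ij}\geq 0$ — once more automatic for a DS state. Thus the entrywise-non-negativity requirement is free, and $M_d(\rho_{DS})\succeq 0 \iff M_d(\rho_{DS})\in\mathcal{DNN}_d$. Chaining the three equivalences yields the claim.

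I expect essentially no analytic obstacle at this stage: the entire difficulty is concentrated in Proposition~\ref{mdef}, whose derivation (computing the partial transpose of each $\ketbra{D^{(d)}_{ij}}{D^{(d)}_{ij}}$ in the computational basis and verifying that the off-diagonal sectors decouple into the $1\times1$ pieces $p_{ij}/2$ while the $\ket{ii}$ sector assembles into $M_d$) is where the care is needed. Granting that structural result, the theorem follows in one line from the fact that positivity of a direct sum reduces to positivity of each summand, combined with the built-in non-negativity of the DS coefficients $p_{ij}$.
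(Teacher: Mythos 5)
Your proof is correct and follows essentially the same route the paper intends: the theorem is presented there as following ``swiftly'' from Proposition~\ref{mdef}, and your argument — positivity of a direct sum reduces to positivity of each block, the scalar blocks $p_{ij}/2$ and the entrywise non-negativity of $M_d(\rho_{DS})$ are automatic from Definition~\ref{def:biDS}, hence $\rho_{DS}^{\Gamma}\succeq 0 \iff M_d(\rho_{DS})\in\mathcal{DNN}_d$ — is exactly that intended chain of equivalences made explicit. You also correctly identify that all the real work is concentrated in Proposition~\ref{mdef} itself.
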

 
\noindent Interestingly, the convex cones $\mathcal{CP}_{d}$ and $\mathcal{DNN}_{d}$ are related by the following Lemma \onlinecite{diananda1962non}:
\begin{lemma}
\label{equality}
The cone of doubly non-negative matrices and the cone of completely positive matrices coincide for $d \leq 4$, i.e., $\mathcal{CP}_d = \mathcal{DNN}_d$.
\end{lemma}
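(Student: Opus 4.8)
The plan is to prove the two inclusions separately. The inclusion $\mathcal{CP}_d \subseteq \mathcal{DNN}_d$ is immediate and holds in every dimension: if $M = CC^T$ with $C$ entrywise non-negative, then $M$ is positive semidefinite by construction, while each entry $M_{ij} = \sum_k C_{ik}C_{jk}$ is a sum of products of non-negative numbers and hence non-negative, so $M\in\mathcal{DNN}_d$. The content of the lemma is therefore the reverse inclusion $\mathcal{DNN}_d \subseteq \mathcal{CP}_d$, which holds only for $d \le 4$ and fails for $d \ge 5$ (the standard counterexample being Horn's $5\times 5$ matrix).

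To establish the reverse inclusion for $d \le 4$, I would first reduce to a normalized form. Positive diagonal congruence $A \mapsto DAD$, with $D$ a positive diagonal matrix, preserves both cones (since $DCC^TD = (DC)(DC)^T$ with $DC \ge 0$) and clearly preserves double non-negativity; moreover a zero diagonal entry of a positive semidefinite matrix forces the whole corresponding row and column to vanish, lowering the effective dimension. Hence one may assume all diagonal entries equal $1$. The elementary building block is that every non-negative symmetric \emph{diagonally dominant} matrix is completely positive, via the explicit decomposition
\begin{equation}
A = \sum_i \Big(a_{ii} - \sum_{j \ne i} a_{ij}\Big) e_i e_i^T + \sum_{i<j} a_{ij}\,(e_i + e_j)(e_i + e_j)^T,
\end{equation}
in which every coefficient is non-negative by diagonal dominance and every dyad is a non-negative outer product.

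The heart of the argument is then to show that any doubly non-negative matrix of order at most $4$ can be brought to diagonally dominant non-negative form by peeling off finitely many non-negative rank-one dyads $vv^T$ with $v \ge 0$, each subtraction leaving the remainder inside $\mathcal{DNN}$ while strictly reducing its rank or the size of its support. The main obstacle, and the precise reason the threshold sits at $d=4$, is that order-$4$ doubly non-negative matrices are in general \emph{not} diagonally dominant, so the vector $v$ must be chosen with care so that $A - vv^T$ remains simultaneously positive semidefinite and entrywise non-negative. The cleanest way to see why this always succeeds for $d \le 4$ is graph-theoretic: associate to $A$ the graph $G(A)$ with an edge between $i$ and $j$ whenever $a_{ij}\neq 0$; a matrix of order $\le 4$ has fewer than five vertices and hence no cycle of length $\ge 5$, and a doubly non-negative matrix whose graph contains no odd cycle of length $\ge 5$ is completely positive. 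Horn's matrix, whose graph is exactly the $5$-cycle $C_5$, is the minimal obstruction that breaks the statement at $d = 5$. Combining the reduction with the diagonally dominant building block yields $\mathcal{DNN}_d = \mathcal{CP}_d$ for $d \le 4$; the full combinatorial casework underlying the order-$4$ reduction is carried out in \onlinecite{diananda1962non}.
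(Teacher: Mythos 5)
First, a point of reference: the paper does not prove this lemma at all --- it is quoted as a classical result with the citation \onlinecite{diananda1962non}, so there is no in-paper argument to compare against. Judged on its own terms, your proposal correctly disposes of the easy inclusion $\mathcal{CP}_d \subseteq \mathcal{DNN}_d$, and both the normalization by positive diagonal congruence and the dyadic decomposition showing that non-negative diagonally dominant symmetric matrices are completely positive are sound. The genuine gap is that the hard inclusion $\mathcal{DNN}_d \subseteq \mathcal{CP}_d$ for $d \leq 4$ --- the entire content of the lemma --- is never actually established. The peeling strategy is only announced: you concede that the dyad $vv^T$ ``must be chosen with care'' but never exhibit the choice or prove one exists. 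The graph-theoretic shortcut offered in its place is circular: the statement ``a doubly non-negative matrix whose graph contains no odd cycle of length $\geq 5$ is completely positive'' is the Kogan--Berman theorem on completely positive graphs, a strictly stronger result than the lemma, and its standard proof (decomposition into blocks that are bipartite, books of triangles, or $K_4$) invokes precisely the order-$\leq 4$ statement being proven here as a base case. Your closing sentence then defers ``the full combinatorial casework'' to \onlinecite{diananda1962non}, which is exactly what the paper itself does; stripped of the incomplete and circular material, the proposal reduces to that same citation rather than a proof.

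Two side remarks, concerning the failure at $d \geq 5$ rather than the lemma itself, are also inaccurate. Horn's matrix is not a counterexample to $\mathcal{DNN}_5 \subseteq \mathcal{CP}_5$: it has negative entries, so it does not even belong to $\mathcal{DNN}_5$. It is a copositive matrix lying outside $\mathcal{PSD}_5 + \mathcal{N}_5$, and it certifies $\mathcal{CP}_5 \neq \mathcal{DNN}_5$ only through cone duality ($\mathcal{CP}_d^{*} = \mathcal{COP}_d$, $\mathcal{DNN}_d^{*} = \mathcal{PSD}_d + \mathcal{N}_d$). Moreover, its graph is $K_5$, since all of its off-diagonal entries are $\pm 1$, not the $5$-cycle $C_5$; the standard direct example of a doubly non-negative matrix that is not completely positive is a different matrix, one supported on $C_5$.
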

\noindent Hence, separability for bipartite DS states can be cast as:
\begin{theorem}[\onlinecite{yu2016separability,tura2018separability}]
Let $\rho_{DS} \in  \mathcal{B}(\mathcal{S(}\mathbb{C}^d\otimes \mathbb{C}^d))$ be a DS state with $d\leq 4$. Then,
    \begin{equation}
        \rho_{DS}\text{ separable} \iff \rho_{DS} \; \text{PPT}.
    \end{equation}
    \label{bipartite}
\end{theorem}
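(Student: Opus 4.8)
The statement is an immediate corollary of the three results just assembled, so the plan is simply to chain them together in the right order. The only nontrivial implication is $\mathrm{PPT} \Rightarrow \mathrm{separable}$, since every separable state is automatically PPT; hence it suffices to establish the equivalence through the reduced matrix $M_d(\rho_{DS})$.

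First I would use Proposition~\ref{mdef} to reduce the PPT condition on the full $d^2 \times d^2$ operator $\rho_{DS}^{\Gamma}$ to a condition on the $d \times d$ block $M_d(\rho_{DS})$. The partial transpose decomposes there as a direct sum of $M_d(\rho_{DS})$ with the scalars $p_{ij}/2$ for $i \neq j$; since the latter are manifestly non-negative, positivity of $\rho_{DS}^{\Gamma}$ is equivalent to positivity of $M_d(\rho_{DS})$. Together with the fact that the entries of $M_d(\rho_{DS})$ are non-negative by construction, this is precisely the earlier (unlabelled) theorem stating that $\rho_{DS}$ is PPT if and only if $M_d(\rho_{DS}) \in \mathcal{DNN}_d$.

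Next I would invoke Lemma~\ref{equality} (Diananda): for $d \leq 4$ the two cones coincide, $\mathcal{DNN}_d = \mathcal{CP}_d$, so $M_d(\rho_{DS}) \in \mathcal{DNN}_d \iff M_d(\rho_{DS}) \in \mathcal{CP}_d$. Finally, Theorem~\ref{CP} identifies membership in $\mathcal{CP}_d$ with separability of $\rho_{DS}$. Reading the chain left to right yields $\rho_{DS}\text{ PPT} \iff M_d \in \mathcal{DNN}_d \iff M_d \in \mathcal{CP}_d \iff \rho_{DS}\text{ separable}$, which is the claim.

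There is no genuine obstacle in the argument itself, as all the analytic weight has been delegated to the cited theorems; the single load-bearing ingredient is Diananda's Lemma, and it is exactly the restriction $d \leq 4$ that it forces upon us. The point worth flagging is that the equivalence collapses for $d \geq 5$, where $\mathcal{CP}_d \subsetneq \mathcal{DNN}_d$: any $M_d$ lying in the gap corresponds to a DS state that is PPT yet entangled, which is precisely the phenomenon the multipartite sections of the paper go on to exploit.
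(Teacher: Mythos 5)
Your proof is correct and follows exactly the route the paper takes: the theorem is presented there as an immediate consequence of chaining Proposition~\ref{mdef}, the $\mathcal{DNN}_d$ characterisation of PPT, Diananda's Lemma~\ref{equality}, and the $\mathcal{CP}_d$ characterisation of separability (Theorem~\ref{CP}). Your closing remark about the gap $\mathcal{CP}_d \subsetneq \mathcal{DNN}_d$ for $d \geq 5$ also matches the paper's discussion of PPT-entangled DS states in that regime.
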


\noindent Explicit examples of PPT-entangled DS states for $d\geq 5$ have been provided in \onlinecite{tura2018separability}, along with the entanglement witnesses that are able to detect them \onlinecite{marconi2021entangled}. 

With all these tools in hand, we present now our results for multipartite symmetric states together with some open conjectures.

\section{\label{sec:3} Entanglement in multipartite diagonal symmetric states}

\subsection{Entanglement in tripartite diagonal symmetric states}
%for  \texorpdfstring{$N = 3$}{N=3}}

\noindent We start by demonstrating that Theorem \ref{bipartite}, holds true also in $N=3$ for any local dimension $d$, i.e., for all $\rho_{DS} \in \mathcal{B}(\mathcal{S}((\mathbb{C}^d)^{\otimes 3}))$.

%To do so, we analyse the structure of the corresponding partial transpositions $\rho_{DS} ^{\Gamma_i}$. Notice that, due to the symmetry, it is enough to examine only the partial transpositon w.r.t. a given partition, i.e., $i=1:2$ (one party versus the other two), as all of them are identical. 
Analogously to the bipartite case, where the relevant information of the partial transpose is encoded in the matrix $M_d(\rho_{DS})$ (see Eqs. (\ref{eqmdef})-(\ref{mbipartit})), also for the tripartite case it is possible to introduce an equivalent matrix, $M^{(3)}(\rho_{DS})$, as stated by the following Lemma. 

\begin{lemma}
    Let $\rho_{DS} \in \mathcal{B}(\mathcal{S}((\mathbb{C}^d)^{\otimes 3}))$. Then, its partial transposition w.r.t. any party leads to a $d^2 \times d^2$ matrix, $M^{(3)}(\rho_{DS})$, of the form:
     \begin{align}
     \label{ds:n3}
 M^{(3)}(\rho_{DS})= \bigoplus_{i=0}^{d-1} M^{(3)}_i(\rho_{DS})~,\hspace{20pt} \\
 M^{(3)}_i (\rho_{DS})= \begin{pmatrix}
        \bar{p}_{i,0,0} & \bar{p}_{i,0,1} & \cdots&\bar{p}_{i,0, d-1}\\
        \bar{p}_{i,0,1} & \bar{p}_{i,1,1} & \cdots&\bar{p}_{i,1,d-1} \\
        \vdots&\vdots&\ddots&\vdots\\
         \bar{p}_{i,0, d-1} & \bar{p}_{i,1, d-1} & \cdots& \bar{p}_{i, d-1 ,d-1}
         \label{miforma}
    \end{pmatrix}
\end{align}
with $\bar{p}_{ijk}=\frac{p_{ijk}}{{N}_{ijk}}$ and ${N_{ijk}}$ being the normalisation factor of the corresponding tripartite Dicke state, i.e.,
\begin{align*}
    N_{ijk}= \left \{ \begin{array}{rl}
1 \quad \mbox{for} \quad i=j=k~,\\
3 \quad \mbox{for} \quad i=j \neq k~,\\
6 \quad \mbox{for} \quad i \neq j \neq k~.
\end{array}
\right.
\end{align*} 
\label{estat}
\end{lemma}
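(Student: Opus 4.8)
The plan is to mirror the proof of Proposition~\ref{mdef} for the bipartite case, computing the partial transpose directly in the computational basis and then reorganising it into blocks. First I would fix the transposed party to be the first one; since $\rho_{DS}$ is invariant under every permutation of the three parties, the result cannot depend on this choice, which justifies the phrase ``w.r.t. any party''. Using $\langle xyz|D_{\mathbf m}\rangle = N_{\mathbf m}^{-1/2}\,\delta_{\{x,y,z\},\mathbf m}$, a short computation gives the matrix elements
\begin{equation*}
\langle abc|\rho_{DS}^{\Gamma}|a'b'c'\rangle = \bar p_{\{a',b,c\}}\,\delta_{\{a',b,c\},\,\{a,b',c'\}},
\end{equation*}
the exact analogue of the quantity $\bar p_{ij}$ appearing in $M_d(\rho_{DS})$: the entry is nonzero only when the triples $\{a',b,c\}$ and $\{a,b',c'\}$ coincide as multisets, and then it equals $p_{\mathbf m}/N_{\mathbf m}$ with $\mathbf m$ that common multiset.

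Next I would exploit the residual symmetry. Because $\rho_{DS}$ commutes with the swap $V_{23}$ of the two untransposed parties and $V_{23}$ commutes with the transposition on party $1$, the matrix $\rho_{DS}^{\Gamma}$ commutes with $V_{23}$; equivalently, by the formula above its elements depend on the row only through $(a,\{b,c\})$ and on the column only through $(a',\{b',c'\})$. Hence $\rho_{DS}^{\Gamma}$ annihilates every state antisymmetric in parties $2,3$ and is supported on $\mathbb{C}^d_1\otimes\mathcal S(\mathbb{C}^d_2\otimes\mathbb{C}^d_3)$, with natural orthonormal basis $\{\ket{a}_1\otimes\ket{D^{(d)}_{jk}}_{23}\}$. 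Rewriting the elements in this basis, I would observe that $\ket{a}_1\ket{D^{(d)}_{jk}}_{23}$ couples to $\ket{a'}_1\ket{D^{(d)}_{j'k'}}_{23}$ only when $\{a',j,k\}=\{a,j',k'\}$, and that this relation splits the basis into two kinds of pieces. The states with $a\in\{j,k\}$, i.e. those of the form $\ket{r}_1\ket{D^{(d)}_{ir}}_{23}$, group for each fixed $i$ into a $d$-dimensional block whose $(r,c)$ entry is proportional to $\bar p_{irc}$; the states with $a\notin\{j,k\}$ decouple into $1\times1$ blocks carrying the manifestly non-negative diagonal entries $\propto\bar p_{\{a,j,k\}}$.

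Collecting the first kind of blocks for $i=0,\dots,d-1$ yields the claimed $d^2\times d^2$ matrix $M^{(3)}(\rho_{DS})=\bigoplus_i M^{(3)}_i(\rho_{DS})$, with $M^{(3)}_i$ as in Eq.~(\ref{miforma}); the entries are non-negative since $p_{\mathbf k}\ge 0$. I expect the main obstacle to be the bookkeeping of normalisation constants in this last step: the symmetric basis states $\ket{D^{(d)}_{jk}}_{23}$ carry their own factors, so the block actually produced in the orthonormal basis is $D_iM^{(3)}_iD_i$ for a positive diagonal matrix $D_i$ absorbing the bipartite Dicke normalisations of parties $2,3$. One must check that this congruence is exactly diagonal and positive, so that it leaves positive semidefiniteness unchanged and the clean form with entries $\bar p_{ijk}=p_{ijk}/N_{ijk}$ is recovered, while simultaneously verifying that no off-block couplings survive and that the complementary (antisymmetric and $a\notin\{j,k\}$) sectors contribute only zeros and non-negative numbers. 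Tracking which computational dyads each Dicke projector produces under the transpose, together with the multiplicities $N_{\mathbf k}$, is the combinatorial core of the argument.
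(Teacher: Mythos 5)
Your proposal is correct, and its key claims all check out: the matrix-element formula $\langle abc|\rho_{DS}^{\Gamma}|a'b'c'\rangle=\bar{p}_{\{a',b,c\}}\,\delta_{\{a',b,c\},\{a,b',c'\}}$ is exactly right; for fixed $i$ the states $\ket{r}_1\ket{D^{(d)}_{ir}}_{23}$ close among themselves (cross-block couplings vanish because the multiset condition $\{r',i,r\}=\{r,i',r'\}$ forces $i=i'$ by cancellation of $r,r'$); the states with $a\notin\{j,k\}$ couple only to themselves; and the block produced in the orthonormal basis is indeed $D_iM^{(3)}_iD_i$ with $(D_i)_{rr}=\sqrt{2}$ for $r\neq i$ and $(D_i)_{ii}=1$, an invertible positive diagonal congruence that changes neither positive semidefiniteness nor entrywise non-negativity, so the PPT conditions are exactly those of $M^{(3)}_i$. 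Your route, however, is genuinely different from the paper's. The paper (Appendix~\ref{p2}) works entirely in the computational basis: it writes out $\rho_{DS}^{\Gamma}$ explicitly for $d=3$ as a direct sum of blocks containing repeated rows and columns, obtains each $M^{(3)}_i$ by ``erasing the repeated rows and columns as well as the $1\times1$ matrices'', and asserts that the general-$d$ case is analogous. You replace that erasure by a symmetry argument: $\rho_{DS}^{\Gamma}$ commutes with the swap $V_{23}$ of the untransposed parties and its entries are symmetric in $(b,c)$, so it vanishes on the antisymmetric sector and is supported on $\mathbb{C}^d\otimes\mathcal{S}(\mathbb{C}^d\otimes\mathbb{C}^d)$; the paper's repeated rows and columns are precisely the computational-basis shadows of the symmetric combinations you retain. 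What your version buys is uniformity in $d$ (no case-by-case display) and a rigorous justification of why discarding the redundancy is harmless (kernel directions plus an invertible diagonal congruence), whereas the paper's version buys concreteness, exhibiting the entries $\bar{p}_{ijk}$ directly with no congruence factors to track. The one step you flagged but did not carry out --- computing $D_i$ explicitly --- is exactly the form given above, and since the lemma as stated displays $M^{(3)}_i$ itself rather than $D_iM^{(3)}_iD_i$, you should include that short computation to finish the proof.
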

Notice that, in this case, each matrix $M_i^{(3)}(\rho_{DS})$ lives in the subspace spanned by the elements of the computational basis $\{\ket{i00},\ket{i11},\dots,\ket{idd}\}$ and, therefore, has at most rank $d$.
\noindent Here, the superindex $(3)$ reminds that we are considering a tripartite system. % Since the partial transposition of a symmetric state does not belong, in general, to the symmetric subspace, $\rho_{DS}^{\Gamma_{i}}$ has rank $d^3$. Nevertheless, DS states are endowed with an extra symmetry, which allows to strongly reduce the full rank of $\rho_{DS}^{\Gamma_{i}}$ to an ``effective rank" of $d^2$, corresponding to a matrix $d^2\times d^2$. \textcolor{red}{ Notice that This can be easily seen if we take into account how these matrices $M^{(3)}(\rho_{DS})$ are built. The procedure consists on trimming down all repeated rows and columns of the partial transpose matrix in the computational basis and taking out all the diagonal elements that are in a direct sum structure, as they will not have any influence in the properties of the partial transpose.}  \sout{after properly deleting the repeated rows and columns in the matrices and taking out all the $1 \times 1$ block matrices corresponding to the coefficients $p_{ijk}>0$.}
\noindent  Moreover, as $M^{(3)}(\rho_{DS})$ in Eq.(\ref{ds:n3}) is a direct sum of $d$ matrices of order $d$, the entanglement properties of $\rho_{DS}$ are determined by the latter set of matrices. For such reason, we can rely on the results presented in Section \ref{sec:2} to derive the following theorems: 
\begin{theorem}
\label{mr1}
Let $\rho_{DS} \in \mathcal{B}(\mathcal{S}((\mathbb{C}^d)^{\otimes 3}))$ be DS with associated matrix $M^{(3)}(\rho_{DS})$. Then,
\begin{center}
    $\rho_{DS}$ separable $\iff M^{(3)}(\rho_{DS}) \in \mathcal{CP}_{d^2}$~.
\end{center}
\end{theorem}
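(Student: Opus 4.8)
The plan is to adapt, almost verbatim, the proof of the bipartite Theorem~\ref{CP} to the tripartite setting, using the Observation as the working definition of separability in the symmetric subspace and the explicit form of $M^{(3)}(\rho_{DS})$ from Lemma~\ref{estat}. Since $M^{(3)}(\rho_{DS})=\bigoplus_{i=0}^{d-1}M^{(3)}_i(\rho_{DS})$ is block diagonal, and complete positivity is inherited by principal submatrices and preserved under direct sums, membership $M^{(3)}(\rho_{DS})\in\mathcal{CP}_{d^2}$ is equivalent to each slice $M^{(3)}_i(\rho_{DS})\in\mathcal{CP}_d$. The whole statement therefore reduces to showing that the fully symmetric array $\bar{p}_{ijk}$ admits a symmetric nonnegative decomposition $\bar{p}_{ijk}=\sum_m g_{im}g_{jm}g_{km}$ with $g_{im}\geq 0$ if and only if every slice $[\bar{p}_{ijk}]_{jk}$ is completely positive.

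For the direction ($\Rightarrow$): by the Observation a separable $\rho_{DS}$ can be written as $\rho_{DS}=\sum_m \ketbra{e_m^{\otimes 3}}{e_m^{\otimes 3}}$ with $\ket{e_m}=\sum_i w_{im}\ket{i}$ (weights absorbed). Projecting onto the Dicke basis and using $\langle D_{\mathbf{k}}|e_m^{\otimes 3}\rangle=\sqrt{N_{ijk}}\,w_{im}w_{jm}w_{km}$ for the partition $\mathbf{k}\leftrightarrow\{i,j,k\}$, the assumed diagonality yields $p_{ijk}=\sum_m N_{ijk}|w_{im}|^2|w_{jm}|^2|w_{km}|^2$, hence $\bar{p}_{ijk}=\sum_m g_{im}g_{jm}g_{km}$ with $g_{im}=|w_{im}|^2\geq 0$. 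Setting $F_{(i,j),(m,l)}=\delta_{il}\sqrt{g_{im}}\,g_{jm}$ gives an entrywise nonnegative $F$ with $FF^{T}=M^{(3)}(\rho_{DS})$, so $M^{(3)}(\rho_{DS})\in\mathcal{CP}_{d^2}$. This direction is routine.

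For the direction ($\Leftarrow$): write $M^{(3)}(\rho_{DS})=FF^{T}$ with $F\geq 0$. Because the matrix is block diagonal and all entries of $F$ are nonnegative, every cross-block product vanishes termwise, so each column of $F$ is supported in a single block $i$; denote its restriction $f^{\mu}\in\mathbb{R}^{d}_{\geq 0}$. This recovers $\bar{p}_{ijk}=\sum_{\mu:\,b(\mu)=i}f^{\mu}_{j}f^{\mu}_{k}$ for every $i$. The remaining task is to synthesise, out of these per-block Gram decompositions, a single ensemble of symmetric product vectors $\ket{e_m^{\otimes 3}}$ reproducing the same $\bar{p}_{ijk}$: one reconstructs candidate vectors, promotes them to local-phase-randomised product states $\ket{e_m(\phi)}^{\otimes 3}$ and averages over the phases, which annihilates all Dicke off-diagonal coherences and leaves a manifestly separable DS state, after which one checks that its Dicke-diagonal entries coincide with $p_{ijk}$.

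The main obstacle is precisely this last synthesis, namely upgrading the collection of slicewise completely positive factorisations into one fully symmetric nonnegative decomposition of the order-three array $\bar{p}_{ijk}$. Unlike the bipartite case, where the single Gram matrix $M_d(\rho_{DS})$ already encodes the product vectors directly, here the block index is distinguished and must be reconciled with the permutation symmetry $\bar{p}_{ijk}=\bar{p}_{jik}=\bar{p}_{ikj}$ relating different blocks; controlling this compatibility is the crux, and is where the local dimension $d$ enters, in parallel with the role played by Lemma~\ref{equality} in the bipartite analysis.
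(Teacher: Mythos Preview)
Your forward direction ($\Rightarrow$) is correct and essentially identical to the paper's: separability yields $\bar p_{ijk}=\sum_m g_{im}g_{jm}g_{km}$ with $g_{im}\ge 0$, from which complete positivity of each slice, and hence of $M^{(3)}(\rho_{DS})$, follows.

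The reverse direction ($\Leftarrow$) contains a genuine gap. You correctly extract per-block factorisations $\bar p_{ijk}=\sum_{\mu:\,b(\mu)=i}f^\mu_j f^\mu_k$, but then you set yourself the task of gluing these into a single fully symmetric decomposition $\bar p_{ijk}=\sum_m g_{im}g_{jm}g_{km}$ and do not carry it out. Worse, your closing remark that ``this is where the local dimension $d$ enters, in parallel with the role played by Lemma~\ref{equality}'' is a misconception: Theorem~\ref{mr1} holds for \emph{every} $d$, and Lemma~\ref{equality} plays no role in its proof. The restriction $d\le 4$ only appears later, in Theorem~\ref{chachi}, when one passes from $\mathcal{DNN}_d$ to $\mathcal{CP}_d$.

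The paper's proof of ($\Leftarrow$) sidesteps the symmetric-tensor-decomposition problem altogether. From $M^{(3)}_i\in\mathcal{CP}_d$ it builds, via the phase-randomisation trick you mention, a \emph{bipartite} separable DS state $\sum_j\rho_j^i$ whose Dicke-diagonal entries reproduce $\bar p_{ijk}$ for $j,k$ varying. It then tensors this bipartite state with $\ketbra{i}{i}$ on a third factor and symmetrises over the three positions, obtaining a tripartite separable state $\rho_i$. A suitably weighted sum of the $\rho_i$ over $i$ recovers $\rho_{DS}$. The key point is that one never needs to reconcile the block decompositions into a common set of product vectors: the distinguished index $i$ is carried explicitly as $\ketbra{i}{i}$, and the permutation symmetry $\bar p_{ijk}=\bar p_{jik}=\bar p_{ikj}$ ensures that the symmetrised sum over $i$ lands back on the original DS state. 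This construction is dimension-independent.
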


\begin{theorem}
\label{mr2}
Let $\rho_{DS} \in \mathcal{B}(\mathcal{S}((\mathbb{C}^d)^{\otimes 3}))$ be DS with associated matrix $M^{(3)}(\rho_{DS})$. Then,
\begin{center}
    $\rho_{DS}$ PPT $\Longleftrightarrow M^{(3)}(\rho_{DS}) \in \mathcal{DNN}_{d^2}$~.
\end{center}
\end{theorem}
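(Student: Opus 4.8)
The plan is to reproduce, in the tripartite setting, the argument that establishes the bipartite equivalence $\rho_{DS}^{\Gamma}\succeq 0 \iff M_d(\rho_{DS})\in\mathcal{DNN}_d$ (the result stated just before Lemma~\ref{equality}), now feeding in the structural description provided by Lemma~\ref{estat}. The first observation is that, for a fully permutation-invariant state, the PPT condition — positivity of the partial transpose across \emph{every} bipartition — reduces to a single check. Indeed, for $N=3$ the only bipartitions are $A|BC$, $B|AC$ and $C|AB$, and the transpose across $X|YZ$ has the same spectrum as the transpose with respect to the single party $X$, since $\rho_{DS}^{T_{YZ}}=(\rho_{DS}^{T_X})^{*}$ and $\rho_{DS}^{T_X}$ is Hermitian. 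Moreover, the permutation unitary exchanging two parties conjugates $\rho_{DS}^{T_X}$ into $\rho_{DS}^{T_Y}$, so all single-party transposes are unitarily equivalent. Hence $\rho_{DS}$ is PPT if and only if $\rho_{DS}^{\Gamma}\succeq 0$ for one party — precisely the operator described in Lemma~\ref{estat}.

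Next I would record that, exactly as in Proposition~\ref{mdef}, this partial transpose splits as an orthogonal direct sum: one family of blocks assembles into $M^{(3)}(\rho_{DS})=\bigoplus_{i=0}^{d-1}M^{(3)}_i(\rho_{DS})$, and the remaining blocks are the tripartite analogue of the $p_{ij}/2$ summands, carrying only the non-negative weights $\bar p_{ijk}=p_{ijk}/N_{ijk}\ge 0$ on the diagonal. These leftover blocks are therefore automatically positive semidefinite, so that $\rho_{DS}^{\Gamma}\succeq 0 \iff M^{(3)}(\rho_{DS})\succeq 0$. Combined with the symmetry reduction above, this gives $\rho_{DS}\ \text{PPT} \iff M^{(3)}(\rho_{DS})\succeq 0$. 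Finally, every entry of $M^{(3)}(\rho_{DS})$ equals some $\bar p_{ijk}\ge 0$, so the non-negativity-of-entries requirement in the definition of $\mathcal{DNN}_{d^2}$ is satisfied for free; hence $M^{(3)}(\rho_{DS})\succeq 0 \iff M^{(3)}(\rho_{DS})\in\mathcal{DNN}_{d^2}$, and chaining the equivalences yields the theorem.

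The step I expect to be most delicate is not conceptual but a careful verification that the decomposition underlying Lemma~\ref{estat} is an \emph{exact} orthogonal direct sum, i.e. that partially transposing each $\ketbra{D_{\mathbf{k}}}{D_{\mathbf{k}}}$ in the computational basis produces matrix elements that fall strictly inside the claimed blocks and never couple the $M^{(3)}_i$ sectors to one another or to the leftover positive blocks. This is the three-party counterpart of the bipartite bookkeeping, complicated by the extra spectator index $i$ that labels the $d$ blocks $M^{(3)}_i$; once this block-orthogonality is confirmed, the reduction $\rho_{DS}^{\Gamma}\succeq 0 \iff M^{(3)}(\rho_{DS})\succeq 0$ is exact and the remaining equivalences are immediate.
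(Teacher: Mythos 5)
Your outer scaffolding is sound and matches the paper's own strategy: by permutation symmetry it suffices to check a single partial transposition, and the entrywise non-negativity required for membership in $\mathcal{DNN}_{d^2}$ comes for free because every entry of $M^{(3)}(\rho_{DS})$ is some $\bar p_{ijk}\geq 0$. The genuine gap sits exactly at the step you flagged as a routine verification, because the structural claim you want to verify is false: in the computational basis, $\rho_{DS}^{\Gamma}$ is \emph{not} an orthogonal direct sum of $M^{(3)}(\rho_{DS})=\bigoplus_{i}M^{(3)}_i(\rho_{DS})$ together with diagonal, automatically-PSD leftovers. What actually happens (and what the paper computes explicitly for $d=3$) is that each $M^{(3)}_i$ occurs \emph{inflated} inside a larger block of size $(2d-1)\times(2d-1)$, spanned by $\{\ket{iaa}\}_{0\leq a<d}\cup\{\ket{aia}\}_{a\neq i}$ (transposing the third party, say): the extra vectors are genuinely coupled to the sector $\{\ket{iaa}\}_a$, e.g. $\bra{iaa}\rho_{DS}^{\Gamma}\ket{aia}=\bar p_{iaa}\neq 0$, and their rows duplicate the rows of the corresponding $\ket{iaa}$. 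For $d=3$ these are precisely the $5\times 5$ blocks in the paper's appendix, where the rows of $\ket{101},\ket{202}$ repeat those of $\ket{011},\ket{022}$. Moreover, not all leftover blocks are diagonal: triples of pairwise distinct indices produce $2\times 2$ blocks with all four entries equal to $\bar p_{abc}=p_{abc}/6$; these are still PSD (they are rank one), but not for the reason you give.

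The missing idea --- which is the actual content of the paper's proof --- is the compression step: positivity of an inflated block is equivalent to positivity of the compressed matrix obtained by deleting duplicated rows and columns. Concretely, the block above equals $V_i\,M^{(3)}_i\,V_i^{T}$, where $V_i$ is the $(2d-1)\times d$ zero-one matrix sending both $\ket{iaa}$ and $\ket{aia}$ to the $a$-th basis vector of $\mathbb{C}^d$; since $V_i$ has full column rank, $V_i M^{(3)}_i V_i^{T}\succeq 0 \iff M^{(3)}_i\succeq 0$. The paper implements the same step via Sylvester's criterion: every principal minor of the inflated block involving a duplicated pair of rows vanishes, and the surviving nontrivial minors are exactly the principal minors of the matrices $M^{(3)}_i$. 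With this repair your chain of equivalences ($\rho_{DS}$ PPT $\iff \rho_{DS}^{\Gamma}\succeq 0 \iff M^{(3)}(\rho_{DS})\succeq 0 \iff M^{(3)}(\rho_{DS})\in\mathcal{DNN}_{d^2}$) goes through, so the plan is salvageable; but as written, the central positivity-transfer step rests on a block decomposition that does not hold.
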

For the sake of easiness, proofs of Theorems \ref{mr1}-\ref{mr2} are moved to appendices \ref{p1} and \ref{p2}, together with the explicit construction of $M^{(3)}(\rho_{DS})$. Moreover, the following observation is needed:
\begin{observation}
    Let $M$ be a square matrix with direct sum structure, i.e., $M=\bigoplus_{k} M_{k}$. Then, M is completely positive (double non-negative) if and only if, for every $k$, $M_{k}$ is completely positive (doubly non-negative).
    \label{propsum}
\end{observation}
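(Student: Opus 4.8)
The plan is to treat the two claims, complete positivity and double non-negativity, separately, and within each to handle the two implications of the biconditional. I would dispose of the doubly non-negative case first, since it is immediate. A block-diagonal matrix $M=\bigoplus_k M_k$ is positive semidefinite if and only if every block $M_k$ is, because $x^{T}Mx=\sum_k x_k^{T}M_k x_k$ for $x$ partitioned conformally with the blocks, so that the spectrum of $M$ is the union of the spectra of the $M_k$. Moreover, since all off-diagonal blocks of $M$ vanish (and the zero matrix is entrywise non-negative), $M$ has non-negative entries if and only if each $M_k$ does. Combining these two equivalences yields $M\in\mathcal{DNN}$ if and only if every $M_k\in\mathcal{DNN}$.

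For complete positivity, the ``if'' direction is constructive: assuming each $M_k=C_k C_k^{T}$ with $C_k$ entrywise non-negative, I would assemble the block-diagonal matrix $C=\bigoplus_k C_k$, which is again non-negative and satisfies $CC^{T}=\bigoplus_k C_k C_k^{T}=\bigoplus_k M_k=M$, so that $M\in\mathcal{CP}$.

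The substantive direction is ``only if''. Here I would start from a non-negative factorization $M=CC^{T}$ with $C\geq 0$ entrywise, guaranteed by $M\in\mathcal{CP}$, and show that each column of $C$ is supported inside a single block. To see this, pick rows $i$ and $j$ lying in two different blocks; since $M$ is block-diagonal, the corresponding entry vanishes, $M_{ij}=\sum_l C_{il}C_{jl}=0$. As every summand is a product of non-negative numbers, each must vanish, so $C_{il}C_{jl}=0$ for all $l$; that is, no column of $C$ can have non-zero entries simultaneously in row $i$ and row $j$. Ranging over all cross-block pairs of rows forces the support of every column of $C$ to lie entirely within one block (columns of empty support may be discarded). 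Grouping the columns according to the block containing their support, I would then read off, for each $k$, a non-negative submatrix $C^{(k)}$, namely the rows of block $k$ restricted to the columns supported there, satisfying $M_k=C^{(k)}(C^{(k)})^{T}$, which establishes $M_k\in\mathcal{CP}$.

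The only genuine obstacle is the support argument in the last paragraph, and it is exactly there that the non-negativity of $C$ is essential: for an unrestricted positive-semidefinite factorization the cancellation $\sum_l C_{il}C_{jl}=0$ would not force each term to vanish, and the conclusion would fail. The remainder is bookkeeping, namely checking that the discarded empty-support columns contribute nothing to any diagonal block, and that reindexing the retained columns yields a bona fide $\mathcal{CP}$-factorization of each $M_k$.
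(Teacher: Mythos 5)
Your proof is correct. Note that the paper itself offers no argument for this observation at all---it is stated as a bare fact (it is a standard result in the theory of completely positive matrices) and then invoked to pass from Theorems \ref{mr1}--\ref{mr2} to Theorem \ref{chachi}---so there is no internal proof to compare yours against; what you have written fills that gap. Your handling of the one substantive step, the ``only if'' direction for $\mathcal{CP}$, is sound: from $M=CC^{T}$ with $C$ entrywise non-negative, the vanishing of a cross-block entry $M_{ij}=\sum_l C_{il}C_{jl}=0$ with all summands non-negative forces $C_{il}C_{jl}=0$ for every $l$, hence each column of $C$ is supported in a single block, and regrouping the columns by block yields a non-negative factorization $M_k=C^{(k)}(C^{(k)})^{T}$ of each block. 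You are also right to flag that this is precisely where entrywise non-negativity is indispensable: for an arbitrary positive-semidefinite factorization the cross terms could cancel, and indeed the analogous claim with $\mathcal{CP}$ replaced by mere positive semidefiniteness would require a different (spectral) argument. The remaining pieces---the $\mathcal{DNN}$ equivalence via the quadratic form and the entrywise check, and the ``if'' direction for $\mathcal{CP}$ via $C=\bigoplus_k C_k$---are routine and correctly executed, including the minor bookkeeping about all-zero columns.
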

%\noindent Recalling the direct sum structure of $M^{(3)}(\rho_{DS})$ in Lemma %\ref{estat} together with Observation \ref{propsum}, directly leads to:
The above theorem now directly leads to our first result
\begin{theorem}
  \label{chachi}
Let $\rho_{DS} \in \mathcal{B}(\mathcal{S}((\mathbb{C}^d)^{\otimes 3}))$ with $d \leq 4$. Then, 
\begin{center}
    $\rho_{DS}$ separable $\Longleftrightarrow \rho_{DS}$ PPT.
\end{center}

\end{theorem}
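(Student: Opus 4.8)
The plan is to chain together the equivalences established in Theorems \ref{mr1} and \ref{mr2} with the block-diagonal structure of the partial transpose, reducing everything to Diananda's equivalence (Lemma \ref{equality}) applied blockwise.

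First I would restate both hypotheses as cone-membership conditions on the associated matrix. By Theorem \ref{mr1}, $\rho_{DS}$ is separable if and only if $M^{(3)}(\rho_{DS}) \in \mathcal{CP}_{d^2}$, while by Theorem \ref{mr2}, $\rho_{DS}$ is PPT if and only if $M^{(3)}(\rho_{DS}) \in \mathcal{DNN}_{d^2}$. Thus the claim $\rho_{DS}\text{ separable}\iff\rho_{DS}\text{ PPT}$ is equivalent to showing $M^{(3)}(\rho_{DS}) \in \mathcal{CP}_{d^2} \iff M^{(3)}(\rho_{DS}) \in \mathcal{DNN}_{d^2}$.

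Next I would exploit the direct-sum form from Lemma \ref{estat}, namely $M^{(3)}(\rho_{DS}) = \bigoplus_{i=0}^{d-1} M_i^{(3)}(\rho_{DS})$, together with Observation \ref{propsum}. This reduces membership of the full $d^2 \times d^2$ matrix in $\mathcal{CP}_{d^2}$ (respectively $\mathcal{DNN}_{d^2}$) to membership of every individual block $M_i^{(3)}(\rho_{DS})$ in $\mathcal{CP}_{d}$ (respectively $\mathcal{DNN}_{d}$). Crucially, each block is only $d \times d$. I would then apply Lemma \ref{equality} to each block: since $d \leq 4$, we have $M_i^{(3)}(\rho_{DS}) \in \mathcal{CP}_{d} \iff M_i^{(3)}(\rho_{DS}) \in \mathcal{DNN}_{d}$ for every $i$. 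Combining the blockwise equivalences back through Observation \ref{propsum} yields $M^{(3)}(\rho_{DS}) \in \mathcal{CP}_{d^2} \iff M^{(3)}(\rho_{DS}) \in \mathcal{DNN}_{d^2}$, and therefore $\rho_{DS}\text{ separable}\iff\rho_{DS}\text{ PPT}$, completing the argument.

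The main obstacle here is conceptual rather than computational. Diananda's equivalence $\mathcal{CP}_d = \mathcal{DNN}_d$ holds only up to dimension four, yet the matrix $M^{(3)}(\rho_{DS})$ that governs separability and PPT has dimension $d^2$, which already exceeds four for $d=3$. Hence Lemma \ref{equality} cannot be invoked on the full matrix, and a naive transfer of the bipartite result would fail. The decisive point is that the block-diagonal decomposition of Lemma \ref{estat} partitions $M^{(3)}(\rho_{DS})$ into blocks of size exactly $d \leq 4$, which is precisely the regime where completely positive and doubly non-negative cones coincide; recognising that this structural feature, and not the overall dimension, is what controls the equivalence is the key step.
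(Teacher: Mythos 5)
Your proposal is correct and follows essentially the same route as the paper: the result is obtained by chaining Theorems \ref{mr1} and \ref{mr2} with the direct-sum structure of $M^{(3)}(\rho_{DS})$ from Lemma \ref{estat}, reducing via Observation \ref{propsum} to $d\times d$ blocks where Lemma \ref{equality} applies. Your closing remark correctly pinpoints the essential subtlety, namely that the blockwise (not global) dimension is what allows Diananda's equivalence to be invoked.
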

Theorem \ref{chachi}, ensures that there is no PPT-entanglement (bound entanglement) in DS states of three parties when the local dimension $d\leq 4$. However, for $d\geq 5$, there must exist tripartite diagonal symmetric states that are PPT-entangled: any state whose tracing out one party leads to a bipartite PPT-entangled state is necessary PPT entangled.

%\noindent \textcolor{red}{An immediate consequence of Theorem \ref{chachi} is that there is no PPT-entanglement in DS states of three parties for $d\leq 4$. However, for $d>4$, as it happens in the bipartite case, PPT is only necessary but not sufficient, and there exist bipartite DS states that are PPT-entangled, as reported in \onlinecite{marconi2021entangled}. In fact, when $d>4$, $M^{(N)}_i(\rho_{DS}) \in \mathcal{DNN}_d$ no longer implies that $M^{(N)}_i(\rho_{DS}) \in \mathcal{CP}_d$. This fact, combined with the existence of PPT-entanglement in bipartite symmetric states of local dimension $d>4$, suggests that PPT entangled DS states of three parties should exist also in this case.} 

\subsection{Entanglement in N-partite DS states for \texorpdfstring{$N\geq 4$}{N}}

\noindent We first tackle the case $N=4$. It is now necessary to consider PPT conditions with respect to two different partitions, namely $2:2$ and $1:3$. First we prove that, for DS states, positivity w.r.t. the largest partition implies positivity w.r.t. any smaller partition. Even though the $N=4$ case, can be cast in a similar way to $N=3$, the dimension of the corresponding $M^{(4)}_{i}$ matrices prevents from ensuring that PPT implies separability for any $d$. Nevertheless, we extend Theorems \ref{mr1}-\ref{mr2} to $N=4$, and, we conjecture that they hold true for any $N$ and $d$. 

\begin{lemma}
\label{lemma:4partite}
    Let $\rho_{DS} \in \mathcal{B}(\mathcal{S}((\mathbb{C}^d)^{\otimes 4}))$. Then, its partial transposition w.r.t.  the largest partition, i.e., $\rho_{DS}^{\Gamma_{2:2}}$, leads to a $\frac{d^3+d}{2} \times \frac{d^3+d}{2} $ matrix $M^{(4)}(\rho_{DS})$ of the form:
    \begin{equation}
    \label{def4parties}
        M^{(4)}(\rho_{DS})= \begin{pmatrix}
            \bar{p}_{r_0,r_0}& \bar{p}_{r_0,r_1}&\dots&\bar{p}_{r_0,r_s}\\\bar{p}_{r_0,r_1}&\bar{p}_{r_1,r_1}&\dots&\bar{p}_{r_1,r_s}\\
            \vdots&\vdots&\ddots&\vdots\\
\bar{p}_{r_0,r_s}&\bar{p}_{r_1,r_s}&\dots&\bar{p}_{r_s,r_s}
        \end{pmatrix}
        \bigoplus_{\substack{i < j\\ i=0}}^{d-1}
        \begin{pmatrix}
            \bar{p}_{ij00}&\bar{p}_{ij01}& \dots &\bar{p}_{ij0 d-1}\\
\bar{p}_{ij01}&\bar{p}_{ij11}&\dots&\bar{p}_{ij1 d-1}\\
            \vdots&\dots&\ddots&\vdots\\
            \bar{p}_{ij0d-1}&\bar{p}_{ij1d-1}&\dots&\bar{p}_{ijd-1 d-1}
        \end{pmatrix}~,
    \end{equation}
   where $r_{i}$ denotes the $i$-th element of a list of dimension $s = d(d+1)/2$ corresponding to the ordered indices of the two-qudit Dicke states $\ket{D_{ij}}$.
\end{lemma}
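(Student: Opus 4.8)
The plan is to exploit the $2\!:\!2$ bipartition $A=\{1,2\}$, $B=\{3,4\}$. Since $\rho_{DS}$ is invariant under permutations within $A$ and within $B$, it is supported on $\mathcal{S}(\mathbb{C}^d\otimes\mathbb{C}^d)_A\otimes\mathcal{S}(\mathbb{C}^d\otimes\mathbb{C}^d)_B$, an $s\times s$ bipartite ``super-qudit'' space ($s=d(d+1)/2$) with orthonormal basis $\{\ket{D^{(d)}_{ij}}_A\otimes\ket{D^{(d)}_{kl}}_B\}_{i\le j,\,k\le l}$. Because the two-qudit symmetric projector is real in the computational basis, the transpose preserves this support, so $\rho_{DS}^{\Gamma_{2:2}}$ lives there as well and can be written as an $s^2\times s^2$ matrix. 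First I would establish that this matrix is extremely sparse and, after discarding the manifestly non-negative $1\times1$ blocks, collapses onto the claimed $\tfrac{d^3+d}{2}$-dimensional object, exactly as $M_d(\rho_{DS})$ and $M^{(3)}(\rho_{DS})$ arose in the bipartite and tripartite cases.

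The key computation is the matrix elements. Expanding $\rho_{DS}=\sum_{\mathbf k}p_{\mathbf k}\ketbra{D_{\mathbf k}}{D_{\mathbf k}}$ in the computational basis and applying the transpose $\Gamma_B$ on parties $3,4$, the rule $\ket{\sigma_1\sigma_2\sigma_3\sigma_4}\bra{\tau_1\tau_2\tau_3\tau_4}\mapsto\ket{\sigma_1\sigma_2\tau_3\tau_4}\bra{\tau_1\tau_2\sigma_3\sigma_4}$ shows that $\langle u|\rho_{DS}^{\Gamma_{2:2}}|v\rangle$ is nonzero only when $\{u_1,u_2\}\uplus\{v_3,v_4\}=\{v_1,v_2\}\uplus\{u_3,u_4\}=\mathbf k$ as multisets, in which case it equals $p_{\mathbf k}/\mathcal{C}(4,\mathbf k)=p_{\mathbf k}/N_{\mathbf k}=\bar p_{\mathbf k}$. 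From this I would read off the conserved ``charge'' $\delta_a$, defined as the multiplicity of the symbol $a$ among $\{u_1,u_2\}$ minus its multiplicity among $\{u_3,u_4\}$: connected basis states share the same $\delta$, so $\rho_{DS}^{\Gamma_{2:2}}$ is block diagonal with blocks labelled by the achievable charge vectors $\delta\in\mathbb{Z}^d$ with $\sum_a\delta_a=0$.

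Then I would classify the sectors. The sector $\delta=0$ forces $\{u_1,u_2\}=\{u_3,u_4\}$, so it is spanned by the $s$ states $\ket{D_{ij}}_A\ket{D_{ij}}_B$; since any two of them satisfy the connection condition, this block is the dense $s\times s$ matrix with entries $\bar p_{iji'j'}$, reproducing the first summand of Eq.~(\ref{def4parties}). The sectors $\delta=\mathbf e_i-\mathbf e_j$ (the vector with $+1$ in entry $i$, $-1$ in entry $j$, $0$ elsewhere, $i\neq j$) are spanned by the $d$ states $\{\ket{D_{ik}}_A\ket{D_{kj}}_B\}_{k=0}^{d-1}$ and give the $d\times d$ block $[\bar p_{ijkl}]_{k,l}$; because $\bar p_{\mathbf k}$ depends only on the multiset $\{i,j,k,l\}$, the charges $\mathbf e_i-\mathbf e_j$ and $\mathbf e_j-\mathbf e_i$ yield the \emph{same} matrix, so keeping $i<j$ lists each block once and produces $\bigoplus_{i<j}$. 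Every remaining charge ($\delta$ of type $2\mathbf e_i-2\mathbf e_j$, $2\mathbf e_i-\mathbf e_j-\mathbf e_k$, or $\mathbf e_i+\mathbf e_j-\mathbf e_k-\mathbf e_l$ with disjoint supports) is realised by a single product state, giving a $1\times1$ block equal to $\bar p_{\mathbf k}\ge 0$ that is dropped. The dimension count $s+\binom{d}{2}\,d=\tfrac{d(d+1)}{2}+\tfrac{d^2(d-1)}{2}=\tfrac{d^3+d}{2}$ then closes the structure.

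The main obstacle is the combinatorial bookkeeping of the normalisations. Working in the orthonormal Dicke-product basis dresses each entry by a factor $\sqrt{N_{ij}N_{kl}N_{i'j'}N_{k'l'}}$ coming from the two-qudit symmetrisation, so the matrix one literally obtains is $D\,[\bar p]\,D$ with $D$ a positive diagonal matrix of these factors, rather than the bare $[\bar p]$ written in the Lemma. The remaining work is to verify that this is a positive diagonal congruence—hence irrelevant for the positivity of $\rho_{DS}^{\Gamma_{2:2}}$, which is all that the subsequent PPT and separability theorems use—and that the same holds for the $A\leftrightarrow B$ (equivalently $\delta\to-\delta$) identification that halves the non-trivial sectors. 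I expect pinning down these factors cleanly, together with proving rigorously that no charge other than $0$ and $\mathbf e_i-\mathbf e_j$ supports a block of dimension larger than one, to be the most delicate part; everything else follows the template already set by $M_d(\rho_{DS})$ and $M^{(3)}(\rho_{DS})$.
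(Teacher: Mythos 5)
Your proposal is correct, and it reaches Eq.~(\ref{def4parties}) by a genuinely different organization than the paper's. The paper (see Appendix~\ref{p2}, whose procedure Lemma~\ref{lemma:4partite} simply re-applies) works entirely in the computational basis: there the entries of $\rho_{DS}^{\Gamma_{2:2}}$ are the bare $\bar{p}_{\mathbf{k}}$'s from the start, the matrix splits into blocks with many \emph{repeated} rows and columns, and $M^{(4)}(\rho_{DS})$ is obtained by deleting the duplicates and the trivial $1\times 1$ blocks (a deletion that preserves positive semidefiniteness). You instead restrict to the orthonormal Dicke-product basis of $\mathcal{S}(\mathbb{C}^d\otimes\mathbb{C}^d)_A\otimes\mathcal{S}(\mathbb{C}^d\otimes\mathbb{C}^d)_B$ (your support-preservation argument via the realness of the symmetric projector is valid) and block-diagonalize by the conserved charge $\delta$. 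This buys you something the paper leaves implicit: a rigorous, dimension-independent enumeration showing that the \emph{only} sectors of dimension larger than one are $\delta=0$ (the dense $s\times s$ block) and $\delta=\mathbf{e}_i-\mathbf{e}_j$ (the $d\times d$ blocks, each appearing for $\pm\delta$ and hence listed once for $i<j$), with every other charge giving a non-negative $1\times 1$ block; your dimension count $s+d\binom{d}{2}=\frac{d^3+d}{2}$ then matches the lemma. The price is exactly the one you identified: in the orthonormal basis you obtain $D\,M^{(4)}(\rho_{DS})\,D$ with $D$ the positive diagonal matrix of factors $\sqrt{N_{ij}N_{kl}}$, not $M^{(4)}(\rho_{DS})$ itself (equivalently, rescale the basis vectors by $D^{-1}$ to get the bare $\bar{p}$ entries, which is in effect what the computational-basis route does automatically). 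One small tightening: you justify discarding $D$ because positivity is congruence-invariant, but Theorem~\ref{mr1n} also uses membership in $\mathcal{CP}_{\frac{d^3+d}{2}}$, not only positivity; this is still fine, since a positive diagonal congruence preserves complete positivity and double non-negativity as well ($M=CC^{T}$ with $C\geq 0$ entrywise gives $DMD=(DC)(DC)^{T}$ with $DC\geq 0$ entrywise, and conversely with $D^{-1}$), so all downstream uses of the lemma are unaffected.
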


\noindent For example, when $d=3$, $s=6$ and $r_{i} \in \{00,01,02,11,12,22\}$. We mention that $M^{(4)}(\rho_{DS})$ is built following the same procedure used in the bipartite and tripartite cases (see Appendix \ref{p2}).  From the structure displayed in Eq.(\ref{def4parties}) stems the following observation:
\begin{lemma}
    Let $\rho_{DS} \in \mathcal{B}(\mathcal{S}((\mathbb{C}^d)^{\otimes 4}))$. Then, $\rho_{DS}^{\Gamma_{2:2}} \succeq 0 \implies \rho_{DS}$ PPT.
    \label{lemman=4}
\end{lemma}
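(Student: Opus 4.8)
The plan is to reduce the statement to the single implication $\rho_{DS}^{\Gamma_{2:2}}\succeq 0 \Rightarrow \rho_{DS}^{\Gamma_{1:3}}\succeq 0$ and then to compare the block structures of the two partial transposes. For $N=4$ the only bipartitions are of type $2:2$ and $1:3$, and since $\rho_{DS}$ is permutation invariant the partial transpose depends only on the \emph{type} of the cut and not on which concrete parties are grouped: two cuts of the same type are related by a permutation leaving $\rho_{DS}$ invariant, hence by a unitary conjugation that preserves the spectrum. Thus PPT is equivalent to $\rho_{DS}^{\Gamma_{2:2}}\succeq 0$ together with $\rho_{DS}^{\Gamma_{1:3}}\succeq 0$, and only the stated implication remains. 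Both transposes are determined by the same data $\bar p_{\mathbf k}=p_{\mathbf k}/N_{\mathbf k}$ and, exactly as for $M_d$ and $M^{(3)}$, each can be written, after restricting to its symmetric support and absorbing the Dicke normalisations into a positive diagonal congruence, as a real symmetric block-diagonal matrix with entries $\bar p_{\mathbf k}$: for the $2:2$ cut this is $M^{(4)}(\rho_{DS})$ of Eq.(\ref{def4parties}), and for the $1:3$ cut I would build, by the construction of Appendix \ref{p2}, an analogous matrix $M^{(1:3)}(\rho_{DS})=\bigoplus_\alpha M^{(1:3)}_\alpha$ whose blocks again carry only $\bar p_{\mathbf k}$ entries.

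The heart of the argument is to show that every block $M^{(1:3)}_\alpha$ occurs, up to a positive diagonal congruence coming from the mismatch between the three- and four-partite normalisation factors $N_{\mathbf k}$, as a principal submatrix of one of the blocks of $M^{(4)}(\rho_{DS})$. (Different $1:3$ blocks may be sent to \emph{overlapping} principal submatrices of the same $2:2$ block; this is harmless, and indeed necessary, since the total effective size of $M^{(1:3)}$ can exceed that of $M^{(4)}$.) A useful guiding identity, which follows from $\Gamma_{2:2}=\Gamma_A\circ\Gamma_B$, is that $\langle a'b'c'd'|\rho_{DS}^{\Gamma_{1:3}}|abcd\rangle = \langle a'b'c'd'|\rho_{DS}^{\Gamma_{2:2}}|abcd\rangle$ whenever the bra and ket carry the same index on party $B$, since the second transposition then acts trivially; in other words the two transposes coincide on every subspace in which $B$ is pinned to a fixed state. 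I would then pass to the symmetric (Dicke) bases and absorb the constants $N_{\mathbf k}$ to promote this coincidence to the desired principal-submatrix relation: the blocks attached to the ``extreme'' Dicke states (those $\ket{D_{\mathbf k}}$ with a single $k_i=N$) embed into the leading $s\times s$ block of Eq.(\ref{def4parties}), while the remaining blocks embed into, or coincide up to a positive scalar with, the $d\times d$ blocks indexed by the pairs $i<j$.

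Granting this embedding the conclusion is immediate. Because a block-diagonal matrix is positive semidefinite if and only if each of its blocks is (Observation \ref{propsum}), the hypothesis $\rho_{DS}^{\Gamma_{2:2}}\succeq 0$ means that every block of $M^{(4)}(\rho_{DS})$ is positive semidefinite. Positive semidefiniteness is inherited by principal submatrices and preserved under congruence by an invertible positive diagonal matrix, so each $M^{(1:3)}_\alpha$ is positive semidefinite; applying Observation \ref{propsum} in the reverse direction gives $M^{(1:3)}(\rho_{DS})\succeq 0$, that is $\rho_{DS}^{\Gamma_{1:3}}\succeq 0$. Combined with the hypothesis this certifies positivity of the partial transpose with respect to every bipartition, i.e. $\rho_{DS}$ is PPT.

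The step I expect to be the genuine obstacle is establishing the block embedding in full generality. The two transposes share the entries $\bar p_{\mathbf k}$ but arrange them with different symmetrisation weights, so the matching is not an on-the-nose equality but only holds after the positive diagonal congruence above; proving that these congruence factors are always well defined and strictly positive, and that the index bookkeeping assigns each $1:3$ block to a bona fide principal submatrix of a $2:2$ block for every local dimension $d$, is the delicate combinatorial part. I would first carry this out explicitly for $d=2,3$ to pin down the correspondence and the normalisation factors, and then phrase it uniformly through the conserved ``difference of histograms'' label that indexes the blocks of each transpose (namely $\mathrm{hist}(\text{first group})-\mathrm{hist}(\text{second group})$), which is what makes the assignment of $1:3$ blocks to $2:2$ principal submatrices canonical.
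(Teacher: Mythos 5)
Your proposal is correct and takes essentially the same route as the paper: the paper writes $\rho_{DS}^{\Gamma_{1:3}}$ as a direct sum of $d\times d$ blocks with entries $\bar{p}_{ijkl}$ indexed by pairs $i\le j$, and then observes that the $i<j$ blocks are literally the $d\times d$ blocks $\bigoplus_{i<j}$ of $M^{(4)}(\rho_{DS})$ in Eq.(\ref{def4parties}), while the $i=j$ blocks are principal submatrices of its leading $s\times s$ block --- precisely the block embedding plus inheritance-of-positivity argument you outline. The only substantive difference is that the ``positive diagonal congruence'' you hedge on is unnecessary: once both partial transposes are expressed through the same four-partite normalized weights $\bar{p}_{\mathbf{k}}=p_{\mathbf{k}}/N_{\mathbf{k}}$, the $1{:}3$ blocks coincide with (sub)blocks of $M^{(4)}(\rho_{DS})$ on the nose, so the combinatorial step you flag as the main obstacle is exactly the short observation the paper makes.
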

\begin{proof}
    The lemma can be proved by looking at the general structure of the partial transposition with respect to partition $1:3$, $\rho_{DS}^{\Gamma_{1:3}}$, which takes the form:
    \begin{equation}
    \rho_{DS}^{\Gamma_{1:3}}=
        \bigoplus_{\substack{i \leq j\\i,j=0}}^{d-1}\begin{pmatrix}
            \bar{p}_{ij00}&\bar{p}_{ij01}& \dots &\bar{p}_{ij0 d-1}\\
            \bar{p}_{ij01}&\bar{p}_{ij11}&\dots&\bar{p}_{ij1 d-1}\\
            \vdots&\dots&\ddots&\vdots\\
            \bar{p}_{ij0d-1}&\bar{p}_{ij1d-1}&\dots&\bar{p}_{ijd-1 d-1}
        \end{pmatrix}
        \label{1versus3}
    \end{equation}
Comparing Eq.(\ref{def4parties}) to Eq.(\ref{1versus3}), it can be observed that, choosing $i<j$, the matrices that arise in Eq.(\ref{1versus3})  are already included in $M^{(4)}(\rho_{DS})$.
Taking $i=j$, the related matrices correspond to principal minors of the first matrix of  Eq.(\ref{def4parties}), thus concluding the proof.
\end{proof}
The above Lemma ensures that we can focus only on $\rho_{DS}^{\Gamma_{2:2}}$ when inspecting the PPT of a given state, since the PPT conditions are already included in $M^{(4)}(\rho_{DS})$. When considering $N$ parties, we can generalise the previous matrices and deduce, by induction, the structure of the matrix $M^{(N)}(\rho_{DS})$, which arises from the partial transpose with respect to the largest partition:

\begin{lemma}
    Let $\rho_{DS} \in \mathcal{B}(\mathcal{S}((\mathbb{C}^d)^{\otimes N}))$ be a DS state with $N$ even. Then, its partial transpose w.r.t. the largest partition, i.e., $\rho_{DS}^{\Gamma_{N/2:N/2}}$, leads to the following associated matrix $M^{(N)}(\rho_{DS})$:
    \begin{align}
    \label{mgeneraleven}
        &M^{(N)}(\rho_{DS})=\bigoplus_{i=0}^{N/2-1}\left(\bigoplus_{k=1}^{k=t}M_k^i\right)~,\\ 
         &M^i_k=\begin{pmatrix}
            \bar{p}_{r_k,s_0,s_0}& \bar{p}_{r_k,s_0,s_1}&\dots&\bar{p}_{r_k,s_0,s_l}\\\bar{p}_{r_k,s_0,s_1}&\bar{p}_{r_k,s_1,s_1}&\dots&\bar{p}_{r_k,s_1,s_l}\\
            \vdots&\vdots&\ddots&\vdots\\
\bar{p}_{r_k,s_0,s_l}&\bar{p}_{r_k,s_1,s_l}&\dots&\bar{p}_{r_k,s_l,s_l}
        \end{pmatrix}    
    \end{align} 
 where $M^i_k$ are square matrices of dimension $\binom{\frac{N}{2}-i+d-1}{d-1}$, $r_k, s_k$ denote the $k$-th element of the ordered lists containing the indices of the Dicke states that span $\mathcal{S}((\mathbb{C}^d)^{\otimes 2i})$ with $i$ equal indices ($t$ different elements) and $\mathcal{S}((\mathbb{C}^d)^{\otimes \frac{N}{2}-i}))$, respectively. Here, $s_l$ denotes the last element of the list.

\end{lemma}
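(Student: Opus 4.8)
The plan is to compute the partial transpose directly in a product Dicke basis adapted to the bipartition, and then to read off the block structure from a conserved quantity; the statement for general even $N$ is then the same computation already carried out for $N=2,3,4$, organised inductively.

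First I would split the $N$ sites into two halves $A$ and $B$ of size $N/2$ and use the branching of Dicke states under this bipartition: every $N$-qudit Dicke state expands as $\ket{D_{\mathbf{k}}}=\sum_{\mathbf{a}+\mathbf{b}=\mathbf{k}}c_{\mathbf{a}\mathbf{b}}\,\ket{D_{\mathbf{a}}}_A\ket{D_{\mathbf{b}}}_B$, where $\mathbf{a},\mathbf{b}\in\mathbb{Z}_{\geq 0}^{d}$ are partitions of $N/2$, addition is componentwise, and the $c_{\mathbf{a}\mathbf{b}}$ are the positive branching coefficients fixed by the Dicke normalisations. Substituting this into $\rho_{DS}=\sum_{\mathbf{k}}p_{\mathbf{k}}\ketbra{D_{\mathbf{k}}}{D_{\mathbf{k}}}$ and using that the Dicke states are real, so that the partial transpose on $B$ sends $\ketbra{D_{\mathbf{b}}}{D_{\mathbf{b}'}}_B$ to $\ketbra{D_{\mathbf{b}'}}{D_{\mathbf{b}}}_B$, I obtain $\rho_{DS}^{\Gamma_{N/2:N/2}}$ as an operator on $\mathcal{S}((\mathbb{C}^d)^{\otimes N/2})_A\otimes\mathcal{S}((\mathbb{C}^d)^{\otimes N/2})_B$ whose entry between $\ket{D_{\mathbf{x}}D_{\mathbf{y}}}$ and $\ket{D_{\mathbf{x}'}D_{\mathbf{y}'}}$ is nonzero only when $\mathbf{x}+\mathbf{y}'=\mathbf{x}'+\mathbf{y}=:\mathbf{k}$, in which case it equals, up to the Dicke normalisation $N_{\mathbf{k}}$, the weight $p_{\mathbf{k}}$, i.e. $\bar{p}_{\mathbf{k}}$. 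This last point is exactly the normalisation bookkeeping already verified for $N=2,3$ and relegated to Appendix \ref{p2}, and I would simply invoke the same computation.

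Next I would extract the block structure from the selection rule $\mathbf{x}+\mathbf{y}'=\mathbf{x}'+\mathbf{y}$, which is equivalent to $\mathbf{x}-\mathbf{y}=\mathbf{x}'-\mathbf{y}'$. Hence the difference vector $\boldsymbol{\delta}:=\mathbf{x}-\mathbf{y}$ is conserved, and $\rho_{DS}^{\Gamma_{N/2:N/2}}$ is block diagonal with one block per admissible value of $\boldsymbol{\delta}$. Writing $\boldsymbol{\delta}=\boldsymbol{\delta}^{+}-\boldsymbol{\delta}^{-}$ as the difference of its positive and negative parts (disjoint supports), and setting $i:=|\boldsymbol{\delta}^{+}|=|\boldsymbol{\delta}^{-}|$, every basis vector in a fixed block is parametrised as $\mathbf{x}=\boldsymbol{\delta}^{+}+\mathbf{m}$, $\mathbf{y}=\boldsymbol{\delta}^{-}+\mathbf{m}$ by its common part $\mathbf{m}$, a partition of $N/2-i$. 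This identifies the block label with the size-$2i$ content $\boldsymbol{\delta}^{+}+\boldsymbol{\delta}^{-}$ — the $r_k$ of the statement — and the index running inside the block with the Dicke states $\mathbf{m}=s$ of $N/2-i$ qudits, so the block has dimension $\binom{N/2-i+d-1}{d-1}$, exactly as claimed. The entry of the block between $\mathbf{m}=s$ and $\mathbf{m}'=s'$ is then $\bar{p}_{\mathbf{k}}$ with $\mathbf{k}=(\boldsymbol{\delta}^{+}+\boldsymbol{\delta}^{-})+s+s'=r_k+s+s'$, reproducing the displayed matrix $M^i_k$. Letting $i$ range over $0,\dots,N/2-1$ (the blocks with $i=N/2$ are $1\times1$, trivially positive, and are suppressed exactly as the $p_{ij}/2$ terms were in the bipartite Proposition \ref{mdef}) and $k$ over the admissible size-$2i$ contents gives the full direct sum $\bigoplus_{i}\bigoplus_{k}M^i_k$.

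Finally I would check consistency with the already-established cases: for $N=2$ the $i=0$ block is the $d\times d$ matrix $M_d$ of Proposition \ref{mdef}; for $N=4$ the $i=0$ block is the first $\tfrac{d(d+1)}{2}\times\tfrac{d(d+1)}{2}$ matrix of Eq.(\ref{def4parties}) and the $i=1$ blocks are its $d\times d$ matrices, which furnishes the base of the induction on even $N$. The main obstacle is not the block-diagonality — that follows cleanly from the conserved difference $\boldsymbol{\delta}$ — but the combinatorial bookkeeping: verifying that the branching coefficients $c_{\mathbf{a}\mathbf{b}}$ combine with the normalisations $N_{\mathbf{k}}$ so that each surviving entry is precisely $\bar{p}_{\mathbf{k}}$, and correctly enumerating the admissible block labels $r_k$ (the size-$2i$ contents admitting a balanced disjoint split) together with their multiplicity $t$. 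I expect this accounting to be the delicate part, and I would handle it by the same explicit construction used for $N=3,4$ in Appendix \ref{p2}, now phrased in the partition-vector notation above so that it applies verbatim for every even $N$.
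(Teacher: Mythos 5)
Your structural analysis is correct, and it is in fact more than the paper itself offers: the paper states this lemma with no proof at all (it is presented as an inductive generalisation of the $N=3,4$ constructions of Lemma \ref{estat} and Lemma \ref{lemma:4partite}). Your branching of the $N$-qudit Dicke states across the $N/2:N/2$ cut, the selection rule $\mathbf{x}+\mathbf{y}'=\mathbf{x}'+\mathbf{y}$, the conserved difference $\boldsymbol{\delta}=\mathbf{x}-\mathbf{y}$, the parametrisation of each block by a partition $\mathbf{m}$ of $N/2-i$, the block dimension $\binom{N/2-i+d-1}{d-1}$, the identification $\mathbf{k}=r+s+s'$, and the suppression of the trivial $i=N/2$ blocks (analogous to the $p_{ij}/2$ terms in Proposition \ref{mdef}) all hold, and blocks belonging to $\boldsymbol{\delta}$ and $-\boldsymbol{\delta}$ are identical matrices, consistent with the paper listing each label $r_k$ once.

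There is, however, one step that fails as stated: the claim that in the product-Dicke basis the surviving entry ``equals, up to the Dicke normalisation $N_{\mathbf{k}}$, the weight $p_{\mathbf{k}}$, i.e.\ $\bar{p}_{\mathbf{k}}$.'' It does not. Writing the branching coefficients explicitly, $c_{\mathbf{a}\mathbf{b}}=\sqrt{\binom{N/2}{\mathbf{a}}\binom{N/2}{\mathbf{b}}/\binom{N}{\mathbf{k}}}$, the matrix element of $\rho_{DS}^{\Gamma_{N/2:N/2}}$ between $\ket{D_{\mathbf{x}}}\ket{D_{\mathbf{y}}}$ and $\ket{D_{\mathbf{x}'}}\ket{D_{\mathbf{y}'}}$ is $p_{\mathbf{k}}\,c_{\mathbf{x}\mathbf{y}'}c_{\mathbf{x}'\mathbf{y}}=\bar{p}_{\mathbf{k}}\sqrt{\tbinom{N/2}{\mathbf{x}}\tbinom{N/2}{\mathbf{y}}\tbinom{N/2}{\mathbf{x}'}\tbinom{N/2}{\mathbf{y}'}}$, which carries basis-dependent multinomial factors. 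Concretely, for $N=4$ the entry between $\ket{D_{01}}\ket{D_{01}}$ and $\ket{D_{00}}\ket{D_{00}}$ is $p_{0001}/2$, whereas Eq.~(\ref{def4parties}) requires $\bar{p}_{0001}=p_{0001}/4$; even the diagonal entries disagree ($2p_{0011}/3$ versus $\bar{p}_{0011}=p_{0011}/6$). Invoking Appendix \ref{p2} does not repair this, because that computation is carried out in the \emph{computational} basis, where every matrix element of $\ketbra{D_{\mathbf{k}}}{D_{\mathbf{k}}}$ within its content class equals $1/\binom{N}{\mathbf{k}}$, so the compressed matrix has entries exactly $\bar{p}_{\mathbf{k}}$; for $N=2$ the two bases coincide (each half is a single qudit), which is presumably why the discrepancy is invisible in Proposition \ref{mdef}. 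Two clean fixes exist: (i) rerun your argument in the computational basis, where your selection rule survives verbatim with ``content of each half'' replacing the Dicke labels, rows and columns within a fixed content pair are identical, and deleting the duplicates yields literally $M^{(N)}(\rho_{DS})$; or (ii) keep your basis and observe that your matrix equals $D\,M^{(N)}(\rho_{DS})\,D$ with $D$ the positive diagonal matrix of entries $\sqrt{\binom{N/2}{\mathbf{x}}\binom{N/2}{\mathbf{y}}}$ — congruence by a positive diagonal matrix preserves positive semidefiniteness, entrywise nonnegativity, $\mathcal{DNN}$ and $\mathcal{CP}$ membership, so every downstream use of the lemma is unaffected — but then you must state the lemma's entry formula as holding only after this rescaling, not as the raw matrix your construction produces.
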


\begin{lemma}
    Let $\rho_{DS} \in \mathcal{B}(\mathcal{S}((\mathbb{C}^d)^{\otimes N}))$ be a DS state with $N$ odd. Then, its partial transpose w.r.t. the largest partition, i.e., $\rho_{DS}^{\Gamma_{\frac{N-1}{2}:\frac{N+1}{2}}}$, leads to the following associated matrix $M^{(N)}(\rho_{DS})$:
    \begin{align}
        &M^{(N)}(\rho_{DS})=\bigoplus_{i=0}^{\frac{N-1}{2}-1}\left(\bigoplus_{k=1}^{k=t}\left(\bigoplus_{j=0}^{j=d-1}M_{jk}^i\right)\right)~,\\ 
        &M^i_{jk}=\begin{pmatrix}
            \bar{p}_{j,r_k,s_0,s_0}& \bar{p}_{j,r_k,s_0,s_1}&\dots&\bar{p}_{j,r_k,s_0,s_l}\\\bar{p}_{j,r_k,s_0,s_1}&\bar{p}_{j,r_k,s_1,s_1}&\dots&\bar{p}_{j,r_k,s_1,s_l}\\
            \vdots&\vdots&\ddots&\vdots\\
\bar{p}_{j,r_k,s_0,s_l}&\bar{p}_{j,r_k,s_1,s_l}&\dots&\bar{p}_{j,r_k,s_l,s_l}
        \end{pmatrix}    
\label{mgeneralodd}
    \end{align} 
 where $M^i_{jk}$ are square matrices of dimension $\binom{\frac{N-1}{2}-i+d-1}{d-1}$, $r_k, s_k$ denote the $k$-th element of the ordered lists containing the indices of the Dicke states that span $\mathcal{S}((\mathbb{C}^d)^{\otimes 2i})$ with $i$ equal indices ($t$ different elements) and $\mathcal{S}((\mathbb{C}^d)^{\otimes \frac{N-1}{2}-i})$, respectively. Here, $s_l$ denotes the last element of the list.

\end{lemma}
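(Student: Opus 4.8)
The plan is to compute $\rho_{DS}^{\Gamma_{\frac{N-1}{2}:\frac{N+1}{2}}}$ directly, following the same recipe used for $N=2,3,4$ in Appendix \ref{p2}, and then to organise the resulting block pattern by induction on $N$, taking the companion even-$N$ Lemma as the inductive input. Write $A$ for the $\frac{N-1}{2}$ transposed parties and $B$ for the remaining $\frac{N+1}{2}$. Since $\rho_{DS}$ is invariant under the full permutation group $\mathcal{G}_N$, the partial transpose $\rho_{DS}^{\Gamma_A}$ remains invariant under $\mathcal{G}_{|A|}\times\mathcal{G}_{|B|}$, hence is supported on $\mathcal{S}((\mathbb{C}^d)^{\otimes|A|})\otimes\mathcal{S}((\mathbb{C}^d)^{\otimes|B|})$ and can be written in the product Dicke basis $\{\ket{D_{\mathbf{k}_A}}\otimes\ket{D_{\mathbf{k}_B}}\}$, where $\mathbf{k}_A,\mathbf{k}_B$ are types of $|A|$ and $|B|$ qudits respectively.

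First I would establish the block-diagonal structure. Using $(\ketbra{\mathbf{x}}{\mathbf{y}})^{\Gamma_A}=\ketbra{\mathbf{y}_A\mathbf{x}_B}{\mathbf{x}_A\mathbf{y}_B}$ together with the fact that $\rho_{DS}$ is diagonal in the Dicke basis, a matrix element of $\rho_{DS}^{\Gamma_A}$ between $\ket{D_{\mathbf{k}_A}D_{\mathbf{k}_B}}$ and $\ket{D_{\mathbf{k}_A'}D_{\mathbf{k}_B'}}$ is nonzero only when $\mathbf{k}_A+\mathbf{k}_B'=\mathbf{k}_A'+\mathbf{k}_B$, i.e. when the signed type $\Delta:=\mathbf{k}_A-\mathbf{k}_B$ is conserved (note $\sum_a\Delta_a=-1$). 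Thus $\rho_{DS}^{\Gamma_A}$ splits into blocks labelled by $\Delta$. Evaluating such an element on the computational representatives, exactly as in the proof of Prop.~\ref{mdef}, shows it collapses to a single entry $\bar{p}_{\mathbf{k}}=p_{\mathbf{k}}/N_{\mathbf{k}}$ with $\mathbf{k}=\mathbf{k}_A+\mathbf{k}_B'$ and $N_{\mathbf{k}}=\binom{N}{\mathbf{k}}$, the Dicke normalisation; this is the step that reproduces the $\bar{p}$ entries of Eq.(\ref{mgeneralodd}).

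Next I would translate the label $\Delta$ into the triple $(i,k,j)$. Writing $\Delta=\Delta^{+}-\Delta^{-}$ with $\Delta^{\pm}\geq0$ of disjoint support, set $i=\sum_a\Delta^{+}_a$ (so $\sum_a\Delta^{-}_a=i+1$); within a block one has $\mathbf{k}_A=\omega+\Delta^{+}$ and $\mathbf{k}_B=\omega+\Delta^{-}$ with $\omega\geq0$ free and $\sum_a\omega_a=\frac{N-1}{2}-i$, so the common free type $\omega$ indexes the rows/columns $s$ and the block has size $\binom{\frac{N-1}{2}-i+d-1}{d-1}$, as claimed. The one unit by which $\Delta^{-}$ exceeds $\Delta^{+}$ is precisely the extra party carried by the larger half $B$: peeling it off as the spectator value $j$ leaves the matched type $r_k=\Delta^{+}+\Delta^{-}-e_j$ on $2i$ qudits, and the residual structure is identical to the even $N-1$ case with every type augmented by $e_j$. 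This is exactly why the odd formula carries the additional $\bigoplus_{j=0}^{d-1}$ relative to the even one, and it drives the induction: assuming the even-$(N-1)$ Lemma, adjoining one $B$-party reproduces Eq.(\ref{mgeneralodd}).

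The main obstacle will be the combinatorial bookkeeping of the third step, namely making the correspondence $\Delta\leftrightarrow(i,k,j)$ a genuine bijection onto the listed blocks. Two points need care: first, the normalisations $N_{\mathbf{k}}=\binom{N}{\mathbf{k}}$ must be tracked so that each block entry is a single $\bar{p}$ rather than a weighted sum (as already occurs in Prop.~\ref{mdef}); second, blocks related by $\Delta\to-\Delta$ carry identical matrices $\bar{p}_{j,r_k,s_a,s_b}$, so the enumeration must list each distinct block once, which is what fixes the ordered-list convention for $r_k,s_k$ and the meaning of the ``$i$ equal indices''. Verifying that the spectator can be cleanly separated in the symmetric basis, where no individual party is singled out, is the delicate point; it is cleanest to argue this at the level of the conserved label $\Delta$ rather than by literally fixing one party, and only then to read off $(i,k,j)$.
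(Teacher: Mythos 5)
Your proposal is sound in its core, and it is worth noting at the outset that the paper itself contains \emph{no} proof of this lemma: the structure in Eq.~(\ref{mgeneralodd}) is simply asserted as a generalisation, ``deduced by induction,'' of the explicit constructions carried out for $N=3$ and $N=4$ (Lemma~\ref{estat}, Lemma~\ref{lemma:4partite}, and Appendix~\ref{p2}, where the recipe is: write $\rho_{DS}^{\Gamma}$ in the computational basis, read off the direct-sum pattern, and erase repeated rows/columns and duplicate blocks). Your derivation is therefore genuinely more explicit than what the paper offers. Its central device --- the conservation of the signed type $\Delta=\mathbf{k}_A-\mathbf{k}_B$ under partial transposition of a state that is diagonal in the Dicke basis, which immediately yields the block decomposition, the block sizes $\binom{\frac{N-1}{2}-i+d-1}{d-1}$ via the free type $\omega$, and the entries $\bar{p}_{\mathbf{k}_A+\mathbf{k}'_B}$ when evaluated on computational representatives --- is correct (I verified it reproduces $M^{(3)}_j$ for $N=3$ and the two families of Lemma~\ref{lemma:4partite} for $N=4$), and it buys a uniform argument where the paper has only case-by-case computation plus analogy.

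Two points in your bookkeeping need repair, though neither is fatal. First, the claim that invariance of $\rho_{DS}^{\Gamma_A}$ under $\mathcal{G}_{|A|}\times\mathcal{G}_{|B|}$ ``hence'' implies support on $\mathcal{S}((\mathbb{C}^d)^{\otimes|A|})\otimes\mathcal{S}((\mathbb{C}^d)^{\otimes|B|})$ is a non sequitur: the maximally mixed state is invariant but not so supported. The support statement is true here, but it follows from the expansion of $\ketbra{D_{\mathbf{k}}}{D_{\mathbf{k}}}$ into products of Dicke states of the two halves (i.e., from the direct computation you perform next), so you should derive it there rather than from symmetry alone. Second, your description of the block multiplicities invokes the identification $\Delta\to-\Delta$, but for odd $N$ this never occurs: every valid label satisfies $\sum_a\Delta_a=-1$, while $-\Delta$ sums to $+1$. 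The actual degeneracy in the odd case is different and many-to-many: distinct labels $\Delta$ sharing the same $\Delta^{+}+\Delta^{-}$ carry identical matrices, and a single $\Delta$ admits one peeling $(j,r_k)$ for each $j\in\mathrm{supp}(\Delta^{-})$; moreover, not every pair $(j,r_k)$ is realisable (e.g., $i=1$, $r_k=2e_a$ admits no disjoint splitting with $j\in\mathrm{supp}(\Delta^{-})$ --- this is precisely what the lemma's restriction on the list of $r_k$, loosely phrased as ``$i$ equal indices,'' must exclude). So the correspondence $\Delta\leftrightarrow(i,k,j)$ you aim to make a bijection should instead be stated as an equality of direct sums \emph{up to multiplicity of identical blocks}, which is harmless for every use the paper makes of $M^{(N)}$ (membership in $\mathcal{DNN}$ or $\mathcal{CP}$ and positivity are insensitive to repeating a block), and is in fact the same convention the paper adopts when it erases duplicates in Appendix~\ref{p2}.
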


\noindent Given the explicit structure of Eqs.(\ref{mgeneraleven})-(\ref{mgeneralodd}), consisting of nested direct sums of matrices $M^i$, we pose the following conjecture:
\begin{conjecture}
     Let  $\rho_{DS} \in \mathcal{B}(\mathcal{S}((\mathbb{C}^d)^{\otimes N}))$ be an $N$-partite DS state. Then, PPT w.r.t. to its largest partition ensures PPT w.r.t any other partition, i.e.,
    \begin{align*}
    &\rho_{DS}^{\Gamma_{N/2:N/2}} \succeq 0 \implies \rho_{DS} \mbox{ PPT for even } N \\
    &\rho_{DS}^{\Gamma_{\frac{N-1}{2}:\frac{N+1}{2}}} \succeq 0 \implies \rho_{DS} \mbox{ PPT for odd } N.
    \end{align*}   
    \label{conjecture}
\end{conjecture}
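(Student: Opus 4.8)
The plan is to prove the conjecture by induction on the number of parties $N$, establishing at each step that the matrices arising from a coarser bipartition (closer to balanced) contain, as principal submatrices, all the blocks arising from any finer bipartition. This generalises the argument already carried out for $N=4$ in Lemma \ref{lemman=4}, where we showed that the blocks of $\rho_{DS}^{\Gamma_{1:3}}$ in Eq.(\ref{1versus3}) are either literally present in $M^{(4)}(\rho_{DS})$ (for $i<j$) or are principal minors of its first block (for $i=j$). The key structural fact is that the partial transpose of a DS state with respect to a $k:(N-k)$ partition decomposes as a direct sum of matrices whose rows and columns are labelled by pairs of Dicke indices: one index set running over the Dicke states of $\mathcal{S}((\mathbb{C}^d)^{\otimes k})$ and the other over those of $\mathcal{S}((\mathbb{C}^d)^{\otimes(N-k)})$, with a fixed ``spectator'' label indexing the direct summand. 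Each entry is a renormalised weight $\bar{p}_{\mathbf{m}}$ depending only on the \emph{multiset} union of the two Dicke indices together with the spectator, so the same weight $p_{\mathbf{k}}$ reappears across different partitions.

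First I would fix notation for the block structure of $\rho_{DS}^{\Gamma_{k:N-k}}$ for an arbitrary partition $k \leq N/2$, writing its entries explicitly as $\bar{p}_{\mathbf{a}\mathbf{b}}$ where $\mathbf{a}$ ranges over size-$k$ multisets and $\mathbf{b}$ over size-$(N-k)$ multisets sharing a common spectator part, exactly mirroring Eqs.(\ref{mgeneraleven})--(\ref{mgeneralodd}). Second, I would compare the finer partition $k:(N-k)$ with the next coarser one $(k+1):(N-k-1)$ and show a containment: each block of the finer transpose is obtained from some block of the coarser one by restricting to a subset of rows/columns, hence is a principal submatrix. The containment splits into two cases, precisely as in Lemma \ref{lemman=4}: when the relevant Dicke indices are ``off-diagonal'' (distinct parts), the finer block appears verbatim as a sub-block of a coarser block after reindexing; when they are ``diagonal'' (a repeated part is moved across the cut), the finer block sits as a collection of principal minors. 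Positivity is inherited by principal submatrices, so $\rho_{DS}^{\Gamma_{(k+1):(N-k-1)}} \succeq 0$ forces $\rho_{DS}^{\Gamma_{k:(N-k)} }\succeq 0$. Iterating from the balanced partition down to $1:(N-1)$ then yields the conjecture for both parities of $N$.

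The main obstacle I anticipate is the \emph{bookkeeping of the reindexing} that realises each finer block as a principal submatrix of a coarser one. Unlike the clean $N=4$ case, for general $N$ one must track how moving a single party across the cut regroups the Dicke-state labels: the spectator index of the coarser partition absorbs one index from the ``transposed'' side, and one has to verify that the renormalisation factors $N_{\mathbf{k}}$ (the multinomial normalisation constants $\mathcal{C}(N,\mathbf{k})$) are consistent so that the shared weights $\bar{p}_{\mathbf{k}}$ genuinely coincide entry-by-entry, not merely up to a scalar. Establishing that the map from finer-block indices to coarser-block indices is injective with image a set of rows/columns closed under the principal-submatrix operation is the combinatorial heart of the argument. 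I would handle this by making the multiset labelling fully explicit and arguing that any Dicke index appearing on the smaller side of the finer cut corresponds to a valid index on the larger side of the coarser cut, with the normalisation handled once and for all by the identity relating $\mathcal{C}(N,\mathbf{k})$ to the combinatorics of distributing parties across the two cuts. Once the containment is combinatorially established, the positivity conclusion is immediate from the standard fact that principal submatrices of a positive semidefinite matrix are positive semidefinite, and the direct-sum structure (together with Observation \ref{propsum}) lets one check positivity block by block.
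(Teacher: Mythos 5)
You should first be aware that the paper contains no proof of this statement to compare against: it is posed explicitly as Conjecture~\ref{conjecture}, and the authors say in so many words that ``it has not been possible to prove that, in general, PPT w.r.t.\ the largest partition implies PPT also w.r.t.\ any other partition,'' offering as evidence only the four-partite case (Lemma~\ref{lemman=4}), the structure lemmas leading to Eqs.~(\ref{mgeneraleven})--(\ref{mgeneralodd}), and the embedding of Section~\ref{sec:4}. Your proposal is therefore necessarily a different route: an induction over bipartition sizes that generalises the mechanism of Lemma~\ref{lemman=4}. The strategy is the right one, but as written your text has one genuine gap, and it is exactly the step the authors could not carry out: the claim that every block of the finer transpose embeds as a principal submatrix of a block of the coarser one is announced, its difficulties are catalogued (reindexing, normalisation, injectivity), but it is never established.

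The good news is that this step does go through, and with the right bookkeeping it is short, so your plan genuinely settles the conjecture once completed. Since partial transposition merely permutes the entries of $\rho_{DS}$, every nonzero entry of $\rho_{DS}^{\Gamma_{k:(N-k)}}$ equals $\bar{p}_{\mathbf{m}}=p_{\mathbf{m}}/\mathcal{C}(N,\mathbf{m})$, where $\mathbf{m}$ is the multiset merging the row and column labels; this disposes of your normalisation worry, as the constants are partition-independent. Label the (deduplicated) rows by pairs of count vectors $(\mathbf{a},\mathbf{b})\in\mathbb{N}^{d}\times\mathbb{N}^{d}$ with $|\mathbf{a}|=k$, $|\mathbf{b}|=N-k$; the entry between $(\mathbf{a},\mathbf{b})$ and $(\mathbf{a}',\mathbf{b}')$ vanishes unless $\mathbf{a}-\mathbf{b}=\mathbf{a}'-\mathbf{b}'$, in which case it equals $\bar{p}_{\mathbf{a}'+\mathbf{b}}$. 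Hence the blocks are indexed by $\delta=\mathbf{a}-\mathbf{b}\in\mathbb{Z}^{d}$, and writing $\delta=\delta_{+}-\delta_{-}$ for its positive and negative parts, the rows of block $\delta$ are exactly the pairs $(\delta_{+}+\mathbf{c},\,\delta_{-}+\mathbf{c})$ with $\mathbf{c}\geq 0$. The crucial observation is that $|\delta_{-}|=|\delta_{+}|+(N-2k)>0$ whenever $k<N-k$, so there is an index $c^{*}$ with $(\delta_{-})_{c^{*}}\geq 1$ that occurs in \emph{every} $\mathbf{b}$ of the block --- this is the ``common element'' whose existence your outline needs but does not verify. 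The map $(\mathbf{a},\mathbf{b})\mapsto(\mathbf{a}+e_{c^{*}},\,\mathbf{b}-e_{c^{*}})$ is then well defined and injective, sends block $\delta$ into the single block $\delta+2e_{c^{*}}$ of $\rho_{DS}^{\Gamma_{(k+1):(N-k-1)}}$, and preserves every entry, since $(\mathbf{a}'+e_{c^{*}})+(\mathbf{b}-e_{c^{*}})=\mathbf{a}'+\mathbf{b}$. Thus each finer block is a principal submatrix of a coarser block, positivity descends, and iterating from the balanced partition down to $1:(N-1)$ proves both parities of the conjecture. Two minor remarks: the two-case split you import from Lemma~\ref{lemman=4} (off-diagonal blocks verbatim versus diagonal blocks as principal minors) is an artefact of how $M^{(4)}$ is displayed and is not needed in this labelling, where the embedding is uniform; and deduplication is harmless because a matrix with repeated rows and columns is positive semidefinite if and only if its deduplicated core is.
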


 \noindent While it has not been possible to prove that, in general, PPT w.r.t.~the largest partition implies PPT also w.r.t. any other partition, we provide strong analytical evidence that suggests that the above conjecture holds true. Neverless, regardless of the above conjecture, Lemma \ref{lemman=4} allows to extend the validity of Theorems \ref{mr1} and \ref{mr2} to the case of four-partite systems.
 
\begin{theorem}
\label{mr1n}
Let $\rho_{DS} \in \mathcal{B}(\mathcal{S}((\mathbb{C}^d)^{\otimes 4}))$ be a DS state with an associated matrix $M^{(4)}(\rho_{DS})$ which arises from the partial transpose w.r.t. the largest partition. Then, 
\begin{center}
    $\rho_{DS}$ separable $\Longleftrightarrow M^{(4)}(\rho_{DS}) \in \mathcal{CP}_{\frac{d^3+d}{2}}$~.
\end{center}
\end{theorem}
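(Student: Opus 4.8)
The plan is to establish the biconditional $\rho_{DS}\text{ separable}\iff M^{(4)}(\rho_{DS})\in\mathcal{CP}_{(d^3+d)/2}$ by adapting, mutatis mutandis, the argument that proved Theorem \ref{mr1} in the tripartite case (appendix \ref{p1}), with the only structural change being that we now work with respect to the $2:2$ partition rather than a $1:N{-}1$ partition. Concretely, I would treat the four-partite DS state as a bipartite state across the $AB{:}CD$ cut, where each group of two qudits carries a symmetric two-qudit index. The key observation, already encoded in Lemma \ref{lemma:4partite} and Lemma \ref{lemman=4}, is that PPT of $\rho_{DS}$ is entirely governed by $\rho_{DS}^{\Gamma_{2:2}}$ and hence by the doubly non-negative structure of $M^{(4)}(\rho_{DS})$, so the matrix $M^{(4)}$ is the correct object to carry the completely-positive characterisation.

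\medskip
\noindent For the forward direction ($\Rightarrow$), I would start from the separable-symmetric decomposition guaranteed by Observation \ref{propsum}'s analogue, namely that a separable DS state admits a decomposition into projectors onto symmetric product states $\ketbra{e\,e\,e\,e}{e\,e\,e\,e}$ with $\ket{e}\in\mathbb{C}^d$. Grouping the four copies into two pairs, each product vector induces, across the $2:2$ cut, a vector in the symmetric two-qudit space $\mathcal{S}(\mathbb{C}^d\otimes\mathbb{C}^d)$ whose components in the Dicke basis are precisely products of the single-copy amplitudes; collecting these component vectors as the columns of a non-negative matrix $C$ yields $M^{(4)}(\rho_{DS})=CC^{T}$, exhibiting membership in $\mathcal{CP}$. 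The technical content here is bookkeeping the normalisation factors $N_{ij}$ so that the entries $\bar{p}$ match the definition in Eq.(\ref{def4parties}); this mirrors the tripartite computation and requires care but no new idea.

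\medskip
\noindent For the reverse direction ($\Leftarrow$), I would invert the construction: given $M^{(4)}(\rho_{DS})=CC^{T}$ with $C\geq 0$ of size $\frac{d^3+d}{2}\times k$, each column of $C$ is a non-negative vector indexed by the ordered Dicke labels appearing in Eq.(\ref{def4parties}); I would show each such column can be realised by a symmetric product vector of the two pairs built from a single underlying $\ket{e}\in\mathbb{C}^d$, so that the $\mathcal{CP}$ factorisation reassembles into a separable decomposition of $\rho_{DS}$ in the sense of Observation \ref{propsum}'s symmetric-separability criterion. The direct-sum block structure of $M^{(4)}$, together with Observation \ref{propsum}, lets me treat the leading $s\times s$ block and the remaining off-diagonal blocks coherently, since a global $\mathcal{CP}$ factorisation respects the index pattern dictated by the $2:2$ partial transpose.

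\medskip
\noindent The main obstacle I anticipate is the reverse direction, specifically verifying that a non-negative column vector compatible with the block pattern of $M^{(4)}$ genuinely arises from a \emph{single} product vector $\ket{e}^{\otimes 4}$ rather than from an unconstrained symmetric two-qudit vector on each side of the cut. In the tripartite case the analogous matching was automatic because each block $M_i^{(3)}$ was indexed so that a completely-positive factor corresponded directly to a single-qudit non-negative vector; here the $2:2$ grouping introduces the intermediate two-qudit Dicke indices $r_i\in\{00,01,\dots\}$, and one must confirm that the combined constraints across the diagonal block and the off-diagonal blocks force the factorising vectors to be consistent tensor powers. I expect this consistency to follow from the same algebraic identity relating $\bar{p}_{r_k,s_\alpha,s_\beta}$ to products of single-copy amplitudes that underlies the forward direction, but it is the step that must be checked most carefully, and it is precisely where the proof leans on the explicit construction of $M^{(4)}(\rho_{DS})$ deferred to Appendix \ref{p2}.
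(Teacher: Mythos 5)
Your forward direction is fine and coincides with the paper's (the paper omits it as ``analogous to the tripartite case,'' and it is indeed just the bookkeeping you describe, with the caveat that the columns of $C$ must be built from squared moduli $\sqrt{\lambda_m}\,|e_{m,a}|^2|e_{m,b}|^2$ rather than from the amplitudes themselves). The genuine gap is in your reverse direction, and it is not merely ``the step that must be checked most carefully'': as formulated, that step fails. Given an \emph{arbitrary} factorisation $M^{(4)}=CC^T$ with $C\geq 0$, there is no reason why a column $\mathbf{b}$ of $C$, indexed by the pair labels $r_i$, should have the product form $(\mathbf{b})_{(ab)}\propto \beta_a\beta_b$ needed for it to arise from a single $\ket{e}^{\otimes 4}$. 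The Hankel-type constraints of $\bar M^{(4)}$ (entries depend only on the multiset $\{a,b,c,d\}$, e.g.\ the $\bigl((ab),(ab)\bigr)$ and $\bigl((aa),(bb)\bigr)$ entries are equal) constrain the \emph{matrix}, i.e.\ the sum $\sum_i \mathbf{b}_i\mathbf{b}_i^T$; they are not inherited by individual rank-one terms, and non-negative factorisations are far from unique (any $C'=CQ\geq 0$ with $Q$ orthogonal gives another). Deducing the product structure ``from the same algebraic identity that underlies the forward direction'' is circular: that identity holds only for the particular factorisation induced by a separable decomposition, whose existence is exactly what is to be proved.

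The paper's proof of $(\Leftarrow)$ (Appendix \ref{mr1nproof}) uses a mechanism that is absent from your proposal: it splits $M^{(4)}=\bar M^{(4)}\oplus\bigoplus_{i<j}M^{(4)}_{ij}$ as in Eq.(\ref{def4parties}) and Observation \ref{propsum}, and from each non-negative column it manufactures separable DS states via the phase/sign-averaging device of Appendix \ref{p1}: states $\ket{\xi_{i,\mathbf{k},s}}$ decorated with signs $(-1)^{\mathbf{k}_l}$ and roots of unity $w^{sl}$, whose uniform mixture erases all coherences between Dicke states. This device needs only entrywise non-negativity of the column, \emph{not} product structure, which is why it closes the $d\times d$ blocks $M^{(4)}_{ij}$: the resulting bipartite separable DS states are tensored with the flags $\ketbra{i}{i}\otimes\ketbra{j}{j}$ and symmetrised over $\mathcal{G}_4$ to yield four-partite separable DS states. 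That said, you have put your finger on a real weak point: for the large block $\bar M^{(4)}$ the paper's own construction sets $|\chi_{i,l}|^2=(\mathbf{b}_i)_l$ with $0\leq l<d$, silently treating the pair-indexed columns (of length $d(d+1)/2$) as single-qudit vectors, i.e.\ assuming precisely the product structure that is also unjustified in your proposal. So your diagnosis of where the difficulty sits is accurate, but your proposed resolution does not constitute a proof, and the tool the paper actually relies on (phase averaging plus flag states plus permutation symmetrisation) does not appear in your argument.
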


\begin{theorem}
\label{mr2n}
Let $\rho_{DS}\in\mathcal{B}(\mathcal{S((}\mathbb{C}^d)^{\otimes 4}))$ be a DS state with an associated matrix $M^{(4)}(\rho_{DS})$ which arises from the partial transpose w.r.t. the largest partition. Then,
\begin{center}
    $\rho_{DS}$ PPT $\Longleftrightarrow M^{(4)}(\rho_{DS}) \in \mathcal{DNN}_{\frac{d^3+d}{2}}$~.
\end{center}
\end{theorem}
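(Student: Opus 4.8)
The plan is to mirror the structure of the proof of Theorem \ref{mr2} (which establishes the analogous PPT $\iff \mathcal{DNN}$ statement for $N=3$), since the present theorem is precisely its four-partite counterpart. The key observation is that, by Lemma \ref{lemma:4partite}, the matrix $M^{(4)}(\rho_{DS})$ arising from the partial transpose $\rho_{DS}^{\Gamma_{2:2}}$ carries all the spectral information about $\rho_{DS}^{\Gamma_{2:2}}$, and, by Lemma \ref{lemman=4}, this single partition already encodes the PPT conditions w.r.t. every partition. Hence ``$\rho_{DS}$ PPT'' is equivalent to ``$\rho_{DS}^{\Gamma_{2:2}} \succeq 0$'', and the whole task reduces to showing that $\rho_{DS}^{\Gamma_{2:2}} \succeq 0 \iff M^{(4)}(\rho_{DS}) \in \mathcal{DNN}_{\frac{d^3+d}{2}}$.

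First I would establish the forward implication by unpacking the two defining conditions of the $\mathcal{DNN}$ cone separately. The non-negativity of the entries of $M^{(4)}(\rho_{DS})$ is immediate: every entry is of the form $\bar{p}_{\mathbf{k}} = p_{\mathbf{k}}/N_{\mathbf{k}}$ with $p_{\mathbf{k}} \geq 0$ and $N_{\mathbf{k}} > 0$ a normalisation factor, so all entries are manifestly non-negative. For positive semidefiniteness, I would invoke the block structure of Eq.(\ref{def4parties}) together with Observation \ref{propsum}: since $M^{(4)}(\rho_{DS})$ is a direct sum of blocks, it is positive semidefinite if and only if each block is, and I would argue that these blocks are exactly the non-trivial (i.e.\ potentially non-diagonal) part of $\rho_{DS}^{\Gamma_{2:2}}$, the remaining part being a direct sum of manifestly non-negative scalar entries $\tfrac{p_{\mathbf{k}}}{N_{\mathbf{k}}}$ on the orthogonal complement. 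This is the same bookkeeping as in the proof of Proposition \ref{mdef} and Lemma \ref{estat}, now carried out for the $2:2$ partition.

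For the converse, I would run the block decomposition in reverse: assuming $M^{(4)}(\rho_{DS}) \in \mathcal{DNN}_{\frac{d^3+d}{2}}$, each block is $\succeq 0$, and I would reassemble these blocks together with the off-block scalar contributions to reconstruct $\rho_{DS}^{\Gamma_{2:2}}$ and verify it is positive semidefinite, again using Observation \ref{propsum} in the direction ``each block $\mathcal{DNN}$ $\implies$ direct sum $\mathcal{DNN}$.'' The crucial point, which I would state carefully, is that the embedding of $M^{(4)}(\rho_{DS})$ into the full $d^4 \times d^4$ partial transpose is an orthogonal one: the computational-basis vectors indexing distinct Dicke sectors are orthogonal, so positivity decouples across sectors and no cross-terms are lost in passing between $\rho_{DS}^{\Gamma_{2:2}}$ and its reduced matrix.

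The main obstacle I anticipate is not conceptual but combinatorial: one must verify that the explicit construction of $M^{(4)}(\rho_{DS})$ in Eq.(\ref{def4parties})—in particular the indexing by the ordered list $r_i$ of two-qudit Dicke labels and the precise normalisation factors $N_{\mathbf{k}}$ attached to each partition type of $N=4$—genuinely reproduces $\rho_{DS}^{\Gamma_{2:2}}$ entry by entry, with no double-counting of permutation-equivalent basis states and no omitted sector. I would discharge this by pointing to the identical, already-verified procedure for $N=3$ in Appendix \ref{p2} and explaining that the $2:2$ partition merely replaces the single transposed party by a transposed pair, so the off-diagonal couplings now connect two-qudit Dicke states $\ket{D_{ij}}$ rather than single-qudit states $\ket{i}$; the argument is otherwise structurally identical.
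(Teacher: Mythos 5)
Your proposal is correct and follows essentially the same route as the paper: the authors prove this theorem by declaring it an adaptation of the tripartite argument (Appendix~\ref{p2}) — i.e., the partial transpose $\rho_{DS}^{\Gamma_{2:2}}$ decomposes as a direct sum whose positivity, after erasing repeated rows/columns and trivially non-negative scalar blocks, is equivalent to $M^{(4)}(\rho_{DS})\succeq 0$, with entry-wise non-negativity automatic from the probabilities — and your use of Lemma~\ref{lemman=4} to reduce ``PPT w.r.t.\ all partitions'' to ``$\rho_{DS}^{\Gamma_{2:2}}\succeq 0$'' is exactly the intended role of that lemma. The only nitpick is that $M^{(4)}(\rho_{DS})$ preserves the positivity (not the full spectrum) of $\rho_{DS}^{\Gamma_{2:2}}$, since duplicate rows and columns are discarded; this does not affect your argument.
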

Prooving Theorems \ref{mr1n}-\ref{mr2n} follows the proofs for Theorems \ref{mr1}-\ref{mr2} stated in the Appendix. The only significant difference is represented by the implication in the left direction of Theorem \ref{mr1n}, for which we refer the reader to Appendix \ref{mr1nproof}.

\noindent Further, we conjecture that Theorems \ref{mr1n}-\ref{mr2n} hold true for any $N$. This conjecture is further supported in the case of even $N$ by the embedding presented in the following section. Nevertheless, since already for $N=4$ the associated matrices have size $m\times m$, with $m>4$, Lemma \ref{equality} does not ensure that PPT is necessary and sufficient for separability, not even in the low-dimensional case, i.e., $d=3,4$.

\section{\label{sec:4}Mapping multipartite DS-states to bipartite symmetric states}

\noindent Increasing the number of parties to $N>4$ leads to matrices that cannot be analitycally handled. However, we show that any multipartite DS state of $N$ (even) qudits of local dimension $d$, can be mapped to a a bipartite symmetric state of local dimension $k>d$: $\rho_{DS}^{(N,d)} \Leftrightarrow \rho_S^{(2,k)} (Fig \ref{Map}) $. 

%\mathcal{S}\left((\mathbb{C}^d)^{\otimes N}\right)$ is mapped to a symmetric state acting on $\mathcal{S}\left( \mathcal{S} ((\mathbb{C}^d)^{\otimes N/2}) \otimes \mathcal{S} ((\mathbb{C}^d)^{\otimes N/2})\right)$. 
This mapping is based on the following steps: %\Jenn{Here, since we are listing the procedure which defines the rest of the chapter, it might be better to list them vertically (i.e., make it visually visible/appealing to readers).} 
\begin{itemize}
    \item[1.] Any Dicke states of $N$ qudits can be expressed as a linear combination of tensor products of two Dicke states of $N/2$ qudits;
    \item[2.] Since $\mathcal{S}((\mathbb{C}^{d})^{\otimes N/2}) \cong \mathbb{C}^{k}$, with $k=\binom{N/2+d-1}{d-1}$, each Dicke state of $N/2$ qudits is associated to a basis vector in $\mathbb{C}^{k}$;
    \item[3.] Any Dicke state of $N$ qudits can now be expressed as a linear combination of \textit{bipartite} Dicke states of local dimension $k$.
    \item[4.] Hence, the initial state $\rho_{DS}$ is transformed into a symmetric state $\rho_S \in \mathcal{B}(\mathcal{S}(\mathbb{C}^k\otimes \mathbb{C}^k))$\label{punt4}.
\end{itemize}
It is worth stressing that this correspondence does not define an isomorphism between the two subspaces, since the first has dimension $\binom{N+d-1}{d-1}$, while the latter has dimension $k(k+1)/2=\binom{N/2+d-1}{d-1}\left(\binom{N/2+d-1}{d-1}+1\right)/2$, i.e., significantly greater. As a consequence, the resulting state through the mapping is generically a symmetric, although not DS, state.
One may object that it might be more convenient to map a multipartite DS state into another symmetric state by means of an isomorphism, rather than considering an embedding into a higher dimensional subspace. However, it is necessary to take into account both the state and its partial transpositions with respect to different partitions. Let us clarify this point with an example. Since $\mbox{dim } \mathcal{S}\left( (\mathbb{C}^{3})^{\otimes 4} \right) = 15 = \mbox{dim } \mathcal{S}\left( \mathbb{C}^{5} \otimes \mathbb{C}^{5} \right)$, there exists an isomorphism between the two subspaces. Hence, one might be tempted to associate a state $\rho_{DS}\in \mathcal{S}\left( (\mathbb{C}^{3})^{\otimes 4} \right) $ to a bipartite DS state $\tilde{\rho}_{DS} \in \mathcal{S}\left( \mathbb{C}^{5} \otimes \mathbb{C}^{5} \right)$. However, since $\mbox{rank} (\rho_{DS}^{\Gamma_{2:2}}) > \mbox{rank} (\tilde{\rho}_{DS}^{\Gamma})$, such a naive isomorphism is not able to encode the whole information of the state $\rho_{DS}$.

\begin{figure}
    \centering
    \includegraphics[scale=0.4]{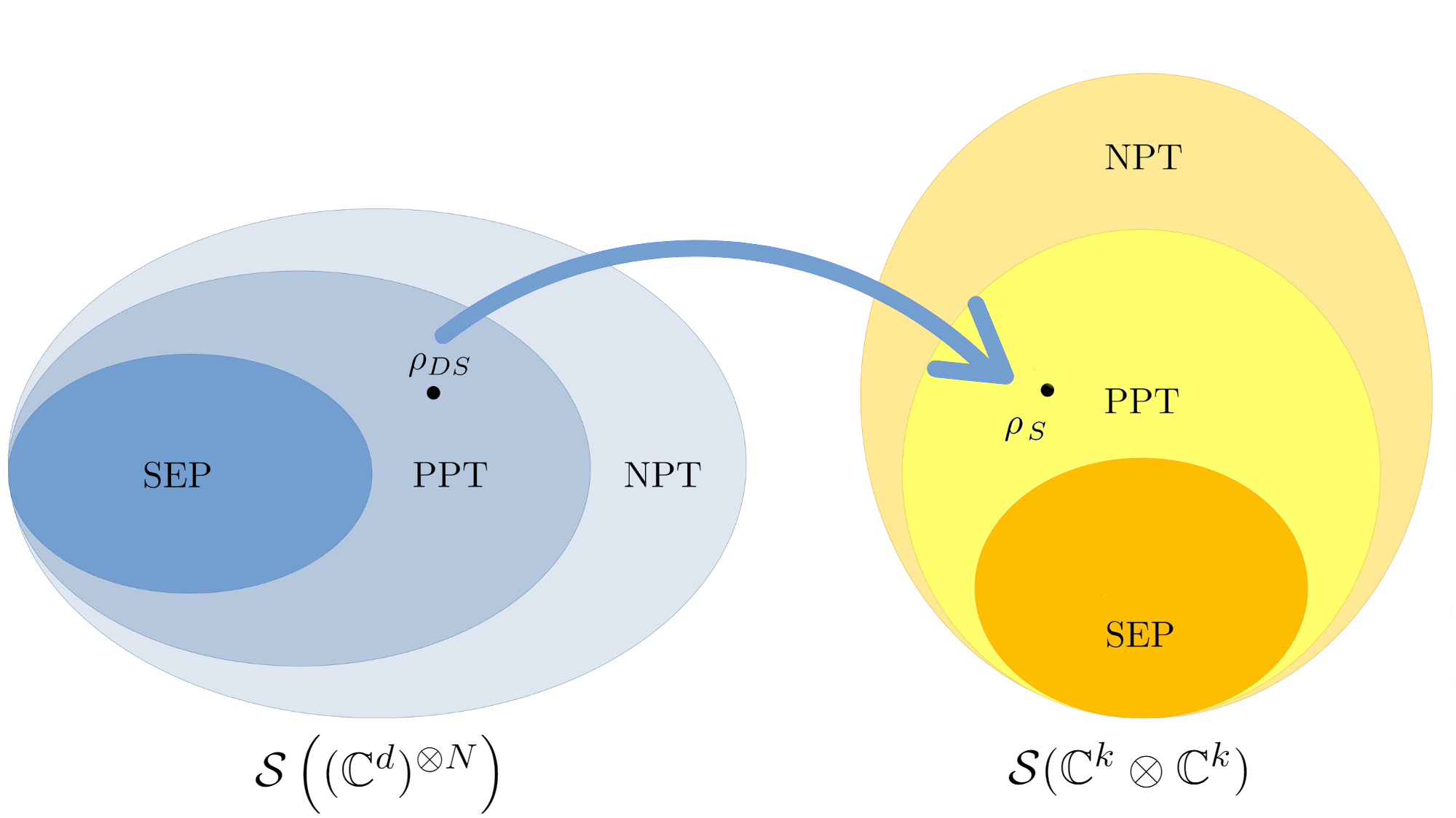}
    \caption{Pictorial representation of the embedding of a multipartite DS state $\rho_{DS}$ into a bipartite symmetric state $\rho_{S}$ of systems with higher local dimension, i.e., $k>d$.}
    \label{Map}
\end{figure}
Notice that this embedding is optimal in the sense that $k$ corresponds to the smallest local dimension of the bipartite system such that $\mbox{rank}(\rho_{DS}) < \mbox{rank}(\rho_{S})$ and $\mbox{rank}(\rho_{DS}^{\Gamma_{N/2:N/2}}) = \mbox{rank}(\rho_{S}^{\Gamma})$. This implies that, in order to check the conditions of $\rho_{DS}$, it is now sufficient to inspect the bipartite state $\rho_S$ which results from the embedding. This is due to the fact that the PPT conditions on this bipartite symmetric state that results from the embedding, are identical to the PPT conditions arising in the multipartite DS case. This fact strongly supports that for even $N$, both the conjecture \ref{conjecture} and the Theorems \ref{mr1n}-\ref{mr2n} hold true.  The advantage of this approach becomes particularly evident when one compares the dimension of the partial transpositions, since $\mbox{dim}(\rho_{S}^{\Gamma}) = \binom{N/2+d-1}{d-1}^2 < \mbox{dim}(\rho_{DS}^{\Gamma_{N/2:N/2}})=d^{N}$, as it can be seen in Fig.\ref{ptdim}.
We expect the same behaviour for odd $N$, although this direct embedding cannot be applied in this case. 
%Our reasoning is based on the fact that there is a common pattern in the structure of the matrices that arise from the partial transposition w.r.t. the biggest partition.}
\begin{figure}[t]
    \centering
    \includegraphics[scale=0.8]{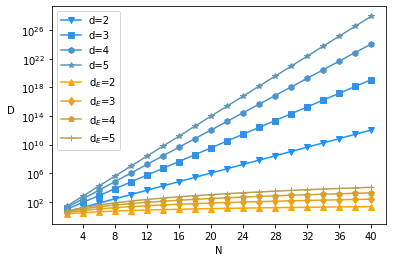}
    \caption{Scaling of the dimension $(D)$ of
   $(\rho_{DS})^{\Gamma_{N/2:N/2}}$ and $\rho_{S}^{\Gamma}$  
   for an even number of parties $(N)$. Here, $\rho_{DS}$ is a DS state with $N$ parties of dimension $d$ (blue) and $\rho_{S}$ is a bipartite symmetric state that results from the mapping of a DS state of local dimension $d_E$ (orange).}
    \label{ptdim}
\end{figure}

\noindent Let us illustrate the above embedding with an explicit example, where we map a multipartite DS state with $N=4$ and $d=3$ to a bipartite symmetric state with $d=6$, i.e., $\rho_{DS}\in \mathcal{B}\left(\mathcal{S}\left( (\mathbb{C}^3)^{\otimes 4} \right)\right) \rightarrow \rho_{S}\in \mathcal{B}\left(\mathcal{S}\left( \mathbb{C}^6 \otimes \mathbb{C}^6\right)\right)$. First, notice that the Dicke basis for four qutrits can be expressed as a linear combination of tensor products between Dicke states of two qutrits, i.e.,
\begin{align}
    &\ket{D^{(3)}_{iiii}}=\ket{D^{(3)}_{ii}}\ket{D^{(3)}_{ii}}\\
    &\ket{D^{(3)}_{iiij}}=\frac{1}{\sqrt{2}}\left(\ket{D^{(3)}_{ii}}\ket{D^{(3)}_{ij}}+\ket{D^{(3)}_{ij}}\ket{D^{(3)}_{ii}}\right)\\
    &\ket{D^{(3)}_{iijj}}=\frac{1}{\sqrt{3}}\left(\ket{D^{(3)}_{ii}}\ket{D^{(3)}_{jj}}+\ket{D^{(3)}_{jj}}\ket{D^{(3)}_{ii}}\right)+\sqrt{\frac{2}{3}}\ket{D^{(3)}_{ij}}\ket{D^{(3)}_{ij}}\\
    &\ket{D^{(3)}_{iijk}}=\frac{1}{\sqrt{6}}\left(\ket{D^{(3)}_{ii}}\ket{D^{(3)}_{jk}}+\ket{D^{(3)}_{jk}}\ket{D^{(3)}_{ii}}\right)+\\
    &\quad \sqrt{\frac{1}{3}}\left(\ket{D^{(3)}_{ij}}\ket{D^{(3)}_{ik}}+\ket{D^{(3)}_{ik}}\ket{D^{(3)}_{ij}}\right)
\end{align}
with $i,j,k \in \{0,1,2\},~i\neq j \neq k$ and where the superscript represents qutrits. One must note that, in general we would still be missing one possible Dicke state, i.e.; the Dicke state with four different indices. Nevertheless, as this specific example consists of qudits, we can, at most have 3 different indices.
Recalling that $\mathcal{S}(\mathbb{C}^{3} \otimes \mathbb{C}^{3}) \cong \mathbb{C}^{6}$, we can relabel them as:
\begin{align}
    &\ket{D^{(3)}_{00}}=\ket{0}, \quad \ket{D^{(3)}_{01}}=\ket{1}, \quad \ket{D^{(3)}_{02}}=\ket{2}, \\&\ket{D^{(3)}_{11}}=\ket{3}, \quad \ket{D^{(3)}_{12}}=\ket{4}, \quad \ket{D^{(3)}_{22}}=\ket{5}.
\end{align}
where $\{\ket{0},\dots, \ket{5}\}$ refers to the computational basis of $\mathbb{C}^6$. As a consequence, a generic DS state in $\mathcal{S}((\mathbb{C}^3)^{\otimes 4})$, leads to a bipartite symmetric state $\rho_S^{(6)} \in \mathcal{B}\left(\mathcal{S}\left( \mathbb{C}^6 \otimes \mathbb{C}^6\right)\right)$ of the form
\begin{align}
\label{sym6}
    \rho_S^{(6)}=\sum_{ij} p_{ij} \ketbra{D_{ij}^{(6)}}{D_{ij}^{(6)}}+ \sum_{ijkl} \alpha_{ij}^{kl} \ketbra{D_{ij}^{(6)}}{D_{kl}^{(6)}}+ \mbox{h.c.}~
\end{align}

\noindent Notice that, although the size of the initial state generically increases, the dimension of the partial transposition of the mapped state becomes considerably smaller as compared to the one of the initial state, thus greatly simplifying the characterisation of PPT-entanglement. For instance, the initial four-partite qutrit state $\rho_{DS}$ from the previous example has ranks $ (\mbox{rank} \rho_{DS}, \mbox{rank} \rho_{DS}^{\Gamma_{2:2}}) = (15,81)$, while the bipartite qudit state $\rho_{S}$ with $d=6$ has ranks $ (\mbox{rank} \rho_{S}, \mbox{rank} \rho_{S}^{\Gamma}) = (21, 36)$. 

\section{\label{sec:5} Mapping symmetric states of qudits to DS states of qubits}

\noindent In this Section we present a technique to recast multipartite qudit systems as DS states of qubits. Noticing that $\mathcal{S}((\mathbb{C}^{2})^{\otimes N}) \cong \mathbb{C}^{N+1}$ , one can set $\ket{D^{N}_{i}} \equiv \ket{i}$, where $\ket{D^{N}_{i}}$ is the Dicke state of $N$ qubits with $i$ excitations, and $\ket{i} \in \mathbb{C}^{N+1}$. Once the mapping has been performed, exploting the fact that PPT DS states of qubits are separable \onlinecite{Quesada_2017}, we can deduce equivalent properties for the original qudit state. For the sake of clarity, in the following we restrict to bipartite states, although we stress that our approach can be generalised to any number of parties. Let us consider the symmetric space of two qutrits, i.e., $\mathcal{S}\left(\mathbb{C}^{3} \otimes \mathbb{C}^3 \right)$. Notice that $\mathbb{C}^{3} \cong \mathcal{S}\left( (\mathbb{C}^{2})^{\otimes 2} \right)$, since it is possible to write
\begin{equation}
\label{map:qutrit}
    \ket{0} = \ket{D^{2}_{0}}, \quad 
    \ket{1}= \ket{D^{2}_{1}}, \quad 
    \ket{2}= \ket{D^{2}_{2}}~.
\end{equation}

\noindent Hence, 
\begin{equation}
 \mathcal{S}\left(\mathbb{C}^{3} \otimes \mathbb{C}^3 \right) = \mathcal{S}\left( \mathcal{S}((\mathbb{C}^{2})^{\otimes 2}) \otimes \mathcal{S}((\mathbb{C}^{2})^{\otimes 2}) \right) \supset \mathcal{S}((\mathbb{C}^{2})^{\otimes 4}))~, 
\end{equation}

\noindent where the last inclusion implies that every Dicke state $\ket{D^{4}_{i} }\in \mathcal{S}((\mathbb{C}^{2})^{\otimes 4})$ can be expressed as a linear combination of the Dicke states $\ket{D^{(3)}_{ij}} \in \mathcal{S}(\mathbb{C}^{3} \otimes \mathbb{C}^{3})$, i.e.,
\begin{align}
\label{map:qub1}
&\ket{D_0^4}=\ket{D^{(3)}_{00}}, \quad
\ket{D_1^4}=\ket{D^{(3)}_{01}}, \\
\label{map:qub2}
&\ket{D_2^4}=\frac{1}{\sqrt{3}}\ket{D^{(3)}_{02}}+\sqrt{\frac{2}{3}}\ket{D^{(3)}_{11}}, \\
\label{map:qub3}
&\ket{D_3^4}=\ket{D^{(3)}_{12}}, \quad
\ket{D_4^4}=\ket{D^{(3)}_{22}}~.
\end{align}

\noindent Inverting Eqs.(\ref{map:qub1})-(\ref{map:qub3}) leads to
\begin{align}
\label{map:qud1}
&\ket{D^{(3)}_{00}} = \ket{D_0^4}, \quad
\ket{D^{(3)}_{01}} = \ket{D_1^4}, \\
\label{map:qud2}
&\ket{D^{(3)}_{02}} = \sqrt{3}\ket{D_2^4} - \sqrt{2} \ket{D^{(3)}_{11}}, \\
\label{map:qud3}
&\ket{D^{(3)}_{12}}= \ket{D_3^4}, \quad
\ket{D^{(3)}_{22}}= \ket{D_4^4}~,
\end{align}
\noindent from which it is evident that not every state in $ \mathcal{S}(\mathbb{C}^{3} \otimes \mathbb{C}^{3})$ can be mapped to a state in $\mathcal{S}((\mathbb{C}^{2})^{\otimes 4}))$. However, imposing conditions on a state $\rho_{S} \in \mathcal{S}(\mathbb{C}^{3} \otimes \mathbb{C}^{3})$ such that $\mbox{rank }(\rho_{S}) = 5$, allows to define an isomorphism to a four-qubit DS state $\rho^{Q}_{DS} = \sum_{k} q_{k} \ketbra{D^{4}_{k}}{D^{4}_{k}}$. In fact, consider the symmetric state $\rho_{S}$ given by
\begin{equation}
\label{one_guhne}
     \rho_{S} = \rho_{DS} + (\alpha^{11}_{02} \ketbra{D_{11}^{(3)}}{D_{02}^{(3)}}+ \mbox{h.c.})~.
\end{equation}
\noindent Imposing $p_{11}=2p_{02} = \sqrt{2} \alpha^{11}_{02} $ and using Eqs.(\ref{map:qud1})-(\ref{map:qud3}), $\rho_{S}$ of Eq.(\ref{one_guhne}) is mapped to a DS state $\rho^{Q}_{DS}$ whose coefficients $q_{k}$ satisfy
\begin{align}
\label{cond:sep1}
    &q_{0} =  p_{00},  \;\;\quad   q_{1}= p_{01}, \\
    \label{cond:sep2}
    &q_{2} = 3p_{02},  \quad q_{3} =p_{12}, \quad  q_{4}=p_{22}.
\end{align}

\noindent As a consequence of this mapping we have the following result: 
\begin{corollary}
\label{th:sep:map}
    Let $\rho_{S}\in \mathcal{S}(\mathbb{C}^{3} \otimes \mathbb{C}^{3})$ be a symmetric state of the form
    \begin{equation}
 \rho_{S} = \rho_{DS} + (\alpha^{11}_{02} \ketbra{D_{11}^{(3)}}{D_{02}^{(3)}} + \emph{h.c.})~, 
    \end{equation}
 whose coefficients satisfy $p_{11}=2p_{02} = \sqrt{2} \alpha^{11}_{02}$. Then, $\rho_{S}$ is separable iff it is PPT.
\end{corollary}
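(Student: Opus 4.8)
The forward implication is immediate, since every separable state is PPT, so the content lies in the converse. The plan is to transport the problem, via the identification $\mathbb{C}^{3}\cong\mathcal{S}((\mathbb{C}^{2})^{\otimes 2})$ used above, into the four-qubit DS setting, where PPT is already known to imply separability \onlinecite{Quesada_2017}. Concretely, let $W\colon\mathbb{C}^{3}\to(\mathbb{C}^{2})^{\otimes 2}$ be the isometry defined by $W\ket{i}=\ket{D^{2}_{i}}$, so that $W^{\dagger}W=\mathbb{1}$ and $WW^{\dagger}=\Pi_{+}$ is the projector onto the symmetric subspace of a qubit pair. I would then study $\tilde{\rho}=(W\otimes W)\,\rho_{S}\,(W\otimes W)^{\dagger}$ on $(\mathbb{C}^{2})^{\otimes 4}$ and show that it is a genuine four-qubit DS state whose separability and PPT properties mirror those of $\rho_{S}$.

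First I would verify that the constraint $p_{11}=2p_{02}=\sqrt{2}\,\alpha^{11}_{02}$ places $\rho_{S}$ in the image of the fully symmetric four-qubit subspace. The only part of $\rho_{S}$ that can have support along the vector $\ket{w}=\sqrt{2/3}\,\ket{D^{(3)}_{02}}-\sqrt{1/3}\,\ket{D^{(3)}_{11}}$ orthogonal to all five $\ket{D^{4}_{k}}$ is the real symmetric block on $\mathrm{span}\{\ket{D^{(3)}_{02}},\ket{D^{(3)}_{11}}\}$, namely
\be
\begin{pmatrix} p_{02} & \alpha^{11}_{02} \\ \alpha^{11}_{02} & p_{11}\end{pmatrix}.
\ee
The stated constraint is exactly the condition that this block be rank one and proportional to $\ketbra{D^{4}_{2}}{D^{4}_{2}}$, with proportionality constant $q_{2}=3p_{02}$; hence $\rho_{S}$ has no support on $\ket{w}$ and coincides with the image of the four-qubit DS state $\rho^{Q}_{DS}=\sum_{k}q_{k}\ketbra{D^{4}_{k}}{D^{4}_{k}}$, whose coefficients are those of Eqs.(\ref{cond:sep1})-(\ref{cond:sep2}). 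This realises the announced isomorphism between rank-five symmetric two-qutrit states of this form and four-qubit DS states.

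Next I would show that the embedding intertwines the relevant partial transpositions. Because the two-qubit Dicke states have real coefficients in the computational basis, the qubit-pair transpose $T^{\otimes 2}$ leaves $\mathcal{S}((\mathbb{C}^{2})^{\otimes 2})$ invariant and, in the Dicke basis, acts as ordinary transposition, so under $W$ it coincides with the transpose $T_{3}$ on $\mathbb{C}^{3}$. Consequently the single-qutrit partial transpose on $\rho_{S}$ corresponds, through $W\otimes W$, to the four-qubit partial transpose across the $2{:}2$ cut, and this cut preserves the subspace $\mathcal{S}((\mathbb{C}^{2})^{\otimes 2})\otimes\mathcal{S}((\mathbb{C}^{2})^{\otimes 2})$ on which the image lives; positivity is therefore preserved exactly, giving $\rho_{S}\ \text{PPT}\iff(\rho^{Q}_{DS})^{\Gamma_{2:2}}\succeq 0$. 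Invoking Lemma \ref{lemman=4} with $d=2$, positivity across the largest partition upgrades to full PPT of $\rho^{Q}_{DS}$, and the qubit result \onlinecite{Quesada_2017} then yields separability of $\rho^{Q}_{DS}$. Finally, writing a symmetric separable decomposition $\rho^{Q}_{DS}=\sum_{i}\lambda_{i}\ketbra{e_{i}}{e_{i}}^{\otimes 4}$ and applying $W^{\dagger}\otimes W^{\dagger}$, each $\ket{e_{i}}^{\otimes 2}$ is sent to some $\ket{\tilde{e}_{i}}\in\mathbb{C}^{3}$, so that $\rho_{S}=\sum_{i}\lambda_{i}\ketbra{\tilde{e}_{i}\tilde{e}_{i}}{\tilde{e}_{i}\tilde{e}_{i}}$ is a convex combination of symmetric product states, hence separable.

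The main obstacle is the third step: one must check carefully that the single-qutrit partial transpose really corresponds to the $2{:}2$ partial transpose of the four-qubit system and that this operation does not leave the image of $W\otimes W$. Both facts hinge on the reality of the Dicke coefficients, which guarantees that $T^{\otimes 2}$ fixes the symmetric-pair subspace and acts there as plain transposition; once this is established, the positivity of the two partial transposes is literally the same linear-algebra condition. A secondary point requiring care is confirming that the rank-five constraint is not merely sufficient but is exactly the locus on which the two-qutrit problem collapses onto the four-qubit DS problem, which is what turns the correspondence into an isomorphism rather than a lossy embedding.
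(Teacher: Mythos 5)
Your proof is correct and follows essentially the same route as the paper: the paper's own one-line proof likewise maps $\rho_{S}$, via the qutrit $\leftrightarrow$ symmetric-qubit-pair identification of Eqs.~(\ref{map:qub1})--(\ref{map:qub3}), to the four-qubit DS state with coefficients given by Eqs.~(\ref{cond:sep1})--(\ref{cond:sep2}) and then invokes the result of \onlinecite{Quesada_2017} that PPT diagonal symmetric qubit states are separable. Your write-up simply makes explicit the steps the paper declares trivial --- the rank-one condition on the $\{\ket{D_{02}^{(3)}},\ket{D_{11}^{(3)}}\}$ block, the intertwining of the single-qutrit transpose with the $2{:}2$ partial transpose through the isometry $W$, and the pull-back of the symmetric separable decomposition --- and all of these check out.
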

The proof of this corollary follows trivially from the fact that this type of states can be mapped to multipartite diagonal symmetric qubits, which are proven to be separable iff PPT \onlinecite{Quesada_2017,yu2016separability}. 
\noindent We conclude this section noting that the above result can be easily generalised to the case of other coherences of the same form, i.e., $\sigma_{CS} = \alpha^{ii}_{jk} \ketbra{D_{ii}^{(3)}}{D_{jk}^{(3)}} + \mbox{h.c.},$ for $i\neq j \neq k$, by considering an alternative mapping.\\

\section{\label{sec:conclusions}Conclusions}
\noindent The study of symmetric states hold several advantages as their  properties and robustness against noise, make them potential candidates to  simulation, computation and sensing.  Here, we have focus on the entanglement properties of  multipartite diagonal symmetric states (mixtures of projectors in generalized Dicke states) where, generically, much less about entanglement properties are known. First, we have demonstrated that, for $N=3$, PPT is necessary and sufficient for separability for $d=3,4$, and
presented strong evidence that this will be the case for any number of parties with these local dimensions. This conjecture is further supported by embedding multipartite DS states into bipartite symmetric states with higher local dimension. The advantage of this approach is twofold: on the one hand, the entanglement properties of multipartite systems can be deduced from known results regarding separability of bipartite symmetric states; on the other, the dimension of the partial transpose after the embedding is greatly reduced, providing a valuable tool in the characterisation of the entanglement in the multipartite scenario. Finally, we show that for local dimensions $d\geq 5$, there exist always bound entanglement irrespectively on the number of 
parties $N$.

\section{\label{sec:Acknowledgments}Acknowledgments}
\noindent J.R-P acknowledges financial support from Ministerio de Ciencia e Innovación of the Spanish Goverment FPU22/01511. J.A. and A.S. acknowledges financial support from Ministerio de Ciencia e Innovación of the Spanish Goverment with funding from European Union NextGenerationEU (PRTR-C17.I1) and by Generalitat de Catalunya. C.M. acknowledges support from the European Union - NextGeneration EU, "Integrated infrastructure initiative in Photonic and Quantum Sciences" - I-PHOQS [IR0000016, ID D2B8D520, CUP B53C22001750006] . A.S. acknowledges financial support from the European Commission QuantERA grant ExTRaQT (Spanish MICIN project PCI2022-132965), by the Spanish MICIN (project PID2022-141283NB-I00) with the support of FEDER funds, and by the Ministry for Digital Transformation and of Civil Service of the Spanish Government through the QUANTUM ENIA project call - Quantum Spain project, and by the European Union through the Recovery, Transformation and Resilience Plan - NextGeneration EU within the framework of the Digital Spain 2026 Agenda.

\bibliographystyle{ieeetr}
\bibliography{biblo.bib}

\section*{\label{sec:Author Declarations}Author Declarations}

\subsection*{Conflict of Interest}

\noindent The authors have no conflicts to disclose.

\subsection*{Author Contributions}

\noindent \textbf{Jordi Romero-Pallejà}: Investigation (lead); formal analysis (lead); conceptualization (equal); writing-original draft (lead); writing-review and editing (equal). \textbf{Jennifer Ahiable}:formal analysis (supporting). \textbf{Alessandro Romancino}: formal analysis (supporting). \textbf{Carlo Marconi}: Investigation (supporting); formal analysis (supporting); conceptualization (equal); writing-original draft (supporting); writing-review and editing (lead); supervision (equal). \textbf{Anna Sanpera}:Investigation (supporting); formal analysis (supporting); conceptualization (lead); writing-original draft (supporting); writing-review and editing (equal); supervision (equal).

\subsection*{Data Availability}

\noindent Data sharing is not applicable to this article as no new data were created or analyzed in this study.

\appendix
\section{Proof of Theorem \ref{mr1}}
\label{p1}

\begin{proof}

\noindent$(\Rightarrow$) Let us recall that any separable DS state $\rho_{DS}$ admits a decomposition of the form:
\begin{align}
\label{dsseparable}
    \rho_{DS}=\sum_i \lambda_i \ketbra{e_i}{e_i}^{\otimes 3}~,
\end{align}
with $\ket{e_i}=\sum_{j=0}^{d-1} e_{i,j}\ket{j}$ and $e_{i,j} \in \mathbb{C}$. Casting Eq.(\ref{dsseparable}) in the computational basis yields:
\begin{equation}
    \rho_{DS} = \sum_i \lambda_i \sum_{jklrst}^{d-1} e_{i,j} e_{i,k} e_{i,l} e_{i,r}^* e_{i,s}^* e_{i,t}^* \ketbra{jkl}{rst}~,
\end{equation}
from which, projecting onto the Dicke basis, leads to the following probabilities:
\begin{align}
    p_{aaa}&=\sum_i \lambda_i |e_{i,a}|^6~,\\
    p_{aab}&=\sum_i 3\lambda_i |e_{i,a}|^4|e_{i,b}|^2~,\\ 
    p_{abc}&=\sum_i 6\lambda_i |e_{i,a}|^2|e_{i,b}|^2 |e_{i,c}|^2~.
\end{align}

\noindent The matrix $M^{(3)}$ can be constructed realising that each submatrix $M^{(3)}_{i}$ (see Eq.(\ref{ds:n3}) in the main text) can be set as $M^{(3)}_{i} = \sum_{j} \lambda_j M^{(3)}_{ij}$, where
\begin{align}
M^{(3)}_{ij}&=|e_{j,i}|^2  \begin{pmatrix}
        |e_{j,0}|^4 & |e_{j,0}|^2|e_{j,1}|^2 & \dots& |e_{j,0}|^2 |e_{j,d-1}|^2\\
        |e_{j,0}|^2|e_{j,1}|^2 & |e_{j,1}|^4 & \dots& |e_{j,1}|^2 |e_{j,d-1}|^2\\
       \vdots &\vdots &\ddots&\vdots\\
        |e_{j,0}|^2 |e_{j,d-1}|^2&|e_{j,1}|^2 |e_{j,d-1}|^2 &\ldots& |e_{j,d-1}|^4
    \end{pmatrix}~.
\end{align}
\noindent Notice that, for the ease of readability, we have set $M^{(3)}_{i} \equiv M^{(3)}_{i}(\rho_{DS}) $ and equivalently for $M^{(3)}_{ij}$.
\noindent It can easily be seen that each matrix $M^{(3)}_{ij}$ is completely positive since it can be decomposed as $M^{(3)}_{ij}=B_i^j {B_i^j}^T$, with
\begin{equation}
    B_i^j=|e_{j,i}|\begin{pmatrix}
 |e_{j,0}|^2,|e_{j,1}|^2,\ldots, |e_{j,d-1}|^2
\end{pmatrix}^T~.
\end{equation}
Hence, $M^{(3)}_{i} \in \mathcal{CP}_{d}$ for every $i$, being a convex combination of completely positive matrices. As a consequence, recalling that $M^{(3)} = \bigoplus_{i=0}^{d-1} = M^{(3)}_i$, it follows that $M^{(3)} \in \mathcal{CP}_{d^2}$, being a direct sum of completely positive matrices. 

($\Leftarrow$) Let us now prove that $M^{(3)} \in \mathcal{CP}_{d^2}$ implies that $\rho_{DS}$ is separable. Since $M^{(3)}$ can be decomposed as $ M^{(3)}= \bigoplus_{i=0}^{d-1} M^{(3)}_i$, it follows that each matrix $M^{(3)}_i$ has to be completely positive, i.e.,
\begin{equation}
    M^{(3)}_i = B_i B_i^T=\sum_j \mathbf{b}_j^i{\mathbf{b}_j^i}^T
\end{equation}
where $\mathbf{b}_j^i$ are the columns of the matrix $B_i$.
Then, we can define quantum states from these vectors, i.e.,
\begin{equation}
    \ket{\chi_j^i}=\sum_{0\leq l <d}\chi_{j,l}^{i}\ket{l}, \quad |\chi_{j,l}^{i}|^2 =(\mathbf{b}_j^i)_l
\end{equation}

\noindent Notice, however, that inserting these states in Eq.(\ref{dsseparable}) does not reproduce a DS state. Hence, in order to erase the unwanted coherences between Dicke states, we introduce the following states instead: 
\begin{equation}
    \ket{\xi_{j,\mathbf{k},s}^i}=\sum_{0\leq l<d} (-1)^{\mathbf{k}_l}w^{sl}\chi^i_{j,l}\ket{l}~,
\end{equation}
with $\mathbf{k}$ being a vector of size $2^d$ in base 2 (i.e., the sum goes over all the possible vectors of size $2^d$ in base 2) and $w$ a $2d^{th}$-root of the unity, i.e., $w=e^{2\pi i/2d}$. Now, we can construct the density matrix associated to each of the terms that arise from the matrix $M^{(3)}_i=\sum_j M^{(3)}_{ij}$, i.e., the matrix $\rho_j^i$ that corresponds to the column $\mathbf{b}_j^i$. Similarly to the approach followed in \onlinecite{Quesada_2017} we construct bipartite separable states of the form:
\begin{equation}
    \rho_j^i=\sum_{0\leq s<d}\sum_{0\leq \mathbf{k}<2^d}\ket{\xi^i_{j,\mathbf{k},s}}\ket{\xi^i_{j,\mathbf{k},s}}\bra{\xi^i_{j,\mathbf{k},s}}\bra{\xi^i_{j,\mathbf{k},s}}
    \label{chaosdee}
\end{equation}
whose expression in the computational basis yields:
\begin{equation}
\begin{split}
        \rho_j^i=\sum_{\substack{0\leq \mathbf{k}<2^d}} \sum_{\substack{0\leq s<d\\
        0\leq l_1,l_2,l_3,l_4 <d}} (-1)^{k_{l_1}+k_{l_2}+k_{l_3}+k_{l_4}}\hspace{2pt} w^{s(l_1+l_2-l_3-l_4)}\chi^{i}_{j,l_1} \chi^{i}_{j,l_2} {\chi^{*}}^{i}_{j,l_3} {\chi^{*}}^{i}_{j,l_4} \ket{l_1}\ket{l_2}\bra{l_3}\bra{l_4}
    \label{muerte}
    \end{split}
\end{equation}
\noindent Performing the sum over $s$ in Eq.(\ref{muerte}) we find
\begin{equation}
    \sum_{0\leq s<d}w^{s(l_1+l_2-l_3-l_4)}= d \delta_{\substack{l_1+l_2,\\l_3+l_4}}
\label{iarrel}
\end{equation}
where the equality follows from the observation that the above sum is a geometrical series and $w$ is a $2d^{\mbox{th}}$-root of the unity. An immediate consequence of Eq.(\ref{iarrel}) is that the total number of excitations of the kets must obviously coincide with the total number of excitations of the bras. Now, we shall focus on the sum over $\mathbf{k}$, i.e.,
\begin{equation*}
    \sum_{0\leq \mathbf{k}<2^d} \substack{(-1)^{k_{l_1}+k_{l_2}+k_{l_3}+k_{l_4}}} 
\end{equation*}
Before proceeding, let us recall that $\mathbf{k}$ is a vector of size $2^d$ expressed in base 2. It can be seen that the only terms that survive in the above sum are those for which every $k_i$ is an even number. In order to make this statement clearer, it is useful to think of these indices as arranged in partitions of the form $(k_1,k_2,k_3,k_4)$. Hence, the only contributions different from zero are those corresponding to partitions (4,0,0,0) and (2,2,0,0) (2 indices have value $a$ and the other 2 have value $b\neq a$). In these cases, the sum yields a factor $2^d$. Combining this result with the condition found from Eq.(\ref{iarrel}), we can rewrite Eq.(\ref{muerte}) as:
\begin{equation}
\begin{split}
    \rho_j^i=d2^d\Bigl(\sum_{0\leq a< d}|\chi^i_{j,a}|^4\ketbra{D_{aa}}{D_{aa}}+\sum_{\substack{0\leq a,b <d,\\a\neq b}} 3 |\chi^i_{j,a}|^2|\chi^i_{j,b}|^2\ketbra{D_{ab}}{D_{ab}}\Bigr)~,
\end{split}
\label{fulldickedemo}
\end{equation}
where the numerical factors are due to normalisation.
Eq.(\ref{fulldickedemo}) clearly shows that each $\rho_j^i$ corresponds to a bipartite separable DS state. Nevertheless, this is not sufficient to conclude the proof since we still have to show that, starting from the previous matrices $\rho_{j}^{i}$, it is possible to construct a tripartite separable DS state. 
The next step of the proof consists on realising that, given a matrix $M_i$, this matrix lives in the space defined by the elements of the basis $\{\ket{i00},\ket{i11},\dots,\ket{idd}\}$ and, therefore, the tripartite state that arises from $\rho_{i}$, for a given $\rho_j^i$, can be obtained setting
\begin{equation*}
    \rho_i= \ketbra{i}{i}\otimes\sum_j\rho_j^i~.
\end{equation*}
%\textcolor{red}{
Notice, however, that this state is separable but not symmetric.
Without loss of generality, as $\rho_j^i$ is separable,
we can cast $\rho_j^i=\sum_{k} \mu_{k} (\rho^{(k)}_A)_{j}^{i}\otimes (\rho^{(k)}_B)_{j}^{i}$. Since we want $\rho_{i}$ to be invariant under the permutation of any of the parties, we must perform an equally weighted convex combination of the form:
\begin{equation}
    \rho_i=\frac{1}{3}\sum_{j,k} \mu_{k}\Bigl( \ketbra{i}{i}\otimes (\rho^{(k)}_A)_{j}^{i}\otimes (\rho^{(k)}_B)_{j}^{i} +(\rho^{(k)}_A)_{j}^{i} \otimes\ketbra{i}{i}\otimes (\rho^{(k)}_B)_{j}^{i}+(\rho^{(k)}_A)_{j}^{i}\otimes (\rho^{(k)}_B)_{j}^{i}\otimes\ketbra{i}{i}\Bigr)~.
\end{equation}

The resulting state $\rho_i$ is indeed DS %\Carlo{Maybe give a hint of this?} a
and, by construction, it is separable being a convex combination of separable states. Finally, we must take into account that each $\rho_i$ needs to be normalised. For this reason, we introduce the normalised state $\tilde{\rho_i}$, defined as
\begin{equation}
\label{fin_state}
    \tilde{\rho}_i=\frac{1}{d2^d}\frac{\rho_i}{||M_i(\rho_{DS})||_1}~,
\end{equation}
and, we construct a convex combination with weights equal to $\frac{||M_i(\rho_{DS})||_1}{||M(\rho_{DS})||_1}$, which leads to the desired state
\begin{equation}
    \rho=\frac{1}{d2^d}\frac{1}{||M(\rho_{DS})||_1}\sum_i\rho_i~.
\end{equation}
Since each $\rho_i$ is a tripartite separable DS state, the same holds true for the state $\rho$ in Eq.(\ref{fin_state}), thus concluding the proof.
\end{proof}

\section{Proof of Theorem \ref{mr2}}
\label{p2}

\begin{proof}
The proof of this Theorem consists in showing that the conditions that guarantee that $M(\rho_{DS}) \succeq 0$ are identical to the ones that ensure PPT of the state $\rho_{DS}^{\Gamma_{i}}$. Recalling the observation made in the main text, it is sufficient to prove that the implication holds true for one particular partition, since $M(\rho_{DS})$ is univocally specified for tripartite systems. Then, one can see that the result follows from the fact that $M(\rho_{DS})$ is constructed expressing $\rho_{DS}^{\Gamma}$ as a direct sum of smaller matrices and erasing the repeated rows and columns as well as the 1x1 matrices. Notice that this latter operation does not affect the positivity of a matrix. In the following, we prove the result for $d=3$ although we stress that the proof is analogous for any $d$. For $d=3$, the partial transposition of a DS state $\rho_{DS}$ can be expressed as:
\begin{align}
\rho_{DS}^{\Gamma}=&\begin{pmatrix}
        p_{000}&\frac{p_{001}}{3}&\frac{p_{002}}{3}&\frac{p_{001}}{3}&\frac{p_{002}}{3}\\
        \frac{p_{001}}{3}&\frac{p_{011}}{3}&\frac{p_{012}}{6}&\frac{p_{011}}{3}&\frac{p_{012}}{6}\\
        \frac{p_{002}}{3}&\frac{p_{012}}{6}&\frac{p_{022}}{3}&\frac{p_{012}}{6}&\frac{p_{022}}{3}\\
        \frac{p_{001}}{3}&\frac{p_{011}}{3}&\frac{p_{012}}{6}&\frac{p_{011}}{3}&\frac{p_{012}}{6}\\
        \frac{p_{002}}{3}&\frac{p_{012}}{6}&\frac{p_{022}}{3}&\frac{p_{012}}{6}&\frac{p_{022}}{3}\\
    \end{pmatrix}  \bigoplus \begin{pmatrix}
        \frac{p_{001}}{3}&\frac{p_{001}}{3}&\frac{p_{011}}{3}&\frac{p_{012}}{6}&\frac{p_{012}}{6}\\
        \frac{p_{001}}{3}&\frac{p_{001}}{3}&\frac{p_{011}}{3}&\frac{p_{012}}{6}&\frac{p_{012}}{6}\\
        \frac{p_{011}}{3}&\frac{p_{011}}{3}&p_{111}&\frac{p_{112}}{3}&\frac{p_{112}}{3}\\
        \frac{p_{012}}{6}&\frac{p_{012}}{6}&\frac{p_{112}}{3}&\frac{p_{122}}{3}&\frac{p_{122}}{3}\\
        \frac{p_{012}}{6}&\frac{p_{012}}{6}&\frac{p_{112}}{3}&\frac{p_{122}}{3}&\frac{p_{122}}{3}\\
    \end{pmatrix}   \bigoplus \begin{pmatrix}
        \frac{p_{002}}{3}&\frac{p_{012}}{6}&\frac{p_{002}}{3}&\frac{p_{012}}{6}&\frac{p_{022}}{3}\\
        \frac{p_{012}}{6}&\frac{p_{112}}{3}&\frac{p_{012}}{6}&\frac{p_{112}}{3}&\frac{p_{122}}{3}\\
        \frac{p_{002}}{3}&\frac{p_{012}}{6}&\frac{p_{002}}{3}&\frac{p_{012}}{6}&\frac{p_{022}}{3}\\
        \frac{p_{012}}{6}&\frac{p_{112}}{3}&\frac{p_{012}}{6}&\frac{p_{112}}{3}&\frac{p_{122}}{3}\\
        \frac{p_{022}}{3}&\frac{p_{122}}{3}&\frac{p_{022}}{3}&\frac{p_{122}}{3}&p_{222}\\
    \end{pmatrix} \bigoplus \begin{pmatrix}
       \frac{p_{012}}{6}&\frac{p_{012}}{6}\\
        \frac{p_{012}}{6}&\frac{p_{012}}{6}\\
    \end{pmatrix} \bigoplus_{i\neq j} \frac{p_{iij}}{3}~,
\end{align}
where the $2 \times 2$ matrix has multiplicity 3.
Let us apply Sylvester's criterion, which states that a matrix A is semidefinite positive if and only if all of its principal minors are non-negative. First, notice that every $1\times1$ matrix is non-negative, since the coefficients $p_{ijk}$ represent probabilities. Second, due to their explicit expression, the $2 \times 2$ matrix are easily found to be doubly non-negative. As for the remaining three $5\times5$ matrices, it can be seen that every principal minor of order 4 and 5 will be equal to 0, since there are repeated rows and columns in all of these submatrices. Finally, regarding the $3\times3$ minors, it can be seen that the only submatrices that do not have a 0 determinant are:
\begin{equation}
\begin{split}
    \begin{pmatrix}
        p_{000} & p_{001}/3 & p_{002}/3 \\
        p_{001}/3 & p_{011}/3& p_{012}/6 \\
        p_{002}/3 & p_{012}/6 & p_{022}/3
    \end{pmatrix}~, 
    \begin{pmatrix}
        p_{001}/3 & p_{011}/3 & p_{012}/6 \\
        p_{011}/3 & p_{111} & p_{112}/3 \\
        p_{012}/6 & p_{112}/3 & p_{122}/3
    \end{pmatrix}~,
    \begin{pmatrix}
        p_{002}/3 & p_{012}/6 & p_{022}/3 \\
        p_{012}/6 & p_{112}/3 & p_{122}/3 \\
        p_{022}/3& p_{122}/3 & p_{222}
    \end{pmatrix}~,
\end{split}
\end{equation}
which correspond to the three possible $M_i(\rho_{DS})$ that arise from the general expression for $M_i(\rho_{DS})$ defined %\textcolor{red}{
in the main text in Eq.(\ref{miforma}) for $d=3$. As a consequence, it follows that imposing PPT of the state $\rho_{DS}$ implies $M(\rho_{DS}) \succeq 0$ and vice versa.
For $d>3$, the partial transposition of a DS state $\rho_{DS}$ presents an analogous direct sum structure. In particular, erasing the repeated rows and columns as before leads to $d$ matrices $M^{(d)}_i(\rho_{DS})$ of the form:
\begin{equation}
    M^{(d)}_i(\rho_{DS})=  \begin{pmatrix}
        \bar{p}_{i,0,0} & \bar{p}_{i,0,1} & \ldots&\bar{p}_{i,0,d-1}\\
        \bar{p}_{i,0,1} & \bar{p}_{i,1,1} & \ldots&\bar{p}_{i,1,d-1} \\
        \vdots&\vdots&\ddots&\vdots\\
         \bar{p}_{i,0,d-1} & \bar{p}_{i,1,d-1} & \ldots& \bar{p}_{i,d-1,d-1}~.
    \end{pmatrix}  
\end{equation}
Hence, if $\rho_{DS}$ is PPT then $M^{(d)}_i(\rho_{DS}) \succeq 0$ for every $i$ and vice versa, from which it follows that $M(\rho_{DS}) \in \mathcal{DNN}_{d^2}$, concluding the proof.
\end{proof}

\section{Proof of Theorem \ref{mr1n}}
\label{mr1nproof}
\begin{proof}    
As explained in the main text, here we prove only the $(\Leftarrow)$ implication of theorem \ref{mr1n}, since the other direction is analogous to the tripartite case. The main difference lies in the fact that, recalling Eq.(\ref{def4parties}), $M^{(4)}$ can be cast as $M^{(4)} = \Bar{M}^{(4)} \bigoplus_{i<j} M^{(4)}_{ij} $, where $\Bar{M}^{(4)} $ is a $d(d+1)/2$ matrix and $ M^{(4)}_{ij}$ are $d(d-1)/2$ matrices of size $d$, each of them being completely positive. As for the matrices $M^{(4)}_{ij}$, we follow the same approach of the tripartite case, that is, each of these matrices will lead to a bipartite separable DS state. The only difference with the previous case is that now we have to add two parties to form a four-partite state. As before, we define our four-partite state as
\begin{equation}
    \rho_{ij}=\frac{1}{|\mathcal{G}_4|}\sum_{\pi \in \mathcal{G}_4,k}\mu_k\pi(\ketbra{i}{i}\otimes\ketbra{j}{j}\otimes(\rho_A^{(k)})_{ij}\otimes(\rho_B^{(k)})_{ij})~,
\end{equation}
where $\mathcal{G}_{4}$ is the group of permutations of $4$ elements.
As for the matrix $\Bar{M}^{(4)}$, being completely positive, we can write:
 \begin{equation}
     \Bar{M}^{(4)}=\sum BB^T=\sum_i \mathbf{b}_i\mathbf{b}_i^T
 \end{equation}
Introducing the states
\begin{equation}
    \ket{\xi_{i,\mathbf{k},s}}=\sum_{0\leq l<d} (-1)^{\mathbf{k}_l}w^{sl}\chi_{i,l}\ket{l}, \quad |\chi_{i,l}|^2 =(\mathbf{b}_i)_l
\end{equation}
we can construct the following four-partite state:
\begin{equation}
    \Bar{\rho}_i=\sum_{\substack{0\leq s<d\\0\leq \mathbf{k}<2^d}}\ket{\xi_{i,\mathbf{k},s}}^{\otimes 4}\bra{\xi_{i,\mathbf{k},s}}^{\otimes 4}~.
\end{equation}
Repeating the same steps of the proof of Theorem \ref{mr1}, we end up with two sums over independent indices, $s$ and $\mathbf{k}$. Performing the sum over $s$ yields a factor $ d \delta_{\substack{l_1+l_2+l_3+l_4,\\l_5+l_6+l_7+l_8}}$, while the sum over $\mathbf{k}$ selects the following partitions of indices, i.e., $(8,0,0,0,0,0,0,0)$, $(6,2,0,0,0,0,0,0)$, $(4,4,0,0,0,0,0,0)$, $(4,2,2,0,0,0,0,0)$ and $(2,2,2,2,0,0,0,0)$. Combining these constraints one recovers the following state:
\begin{align}
    \Bar{\rho}=\sum_i\Bar{\rho_i}=\sum_i d2^d\Bigl(&\sum_{0\leq a< d}|\chi_{i,a}|^8\ketbra{D_{aaaa}}{D_{aaaa}}+\sum_{\substack{0\leq a,b <d,\\a\neq b}} 4|\chi_{i,a}|^6|\chi_{i,b}|^2\ketbra{D_{aaab}}{D_{aaab}}+\sum_{\substack{0\leq a,b <d,\\a\neq b}}6|\chi_{i,a}|^4|\chi_{i,b}|^4\ketbra{D_{aabb}}{D_{aabb}}+\\&\sum_{\substack{0\leq a,b,c <d,\\a\neq b\neq c}} 12|\chi_{i,a}|^4|\chi_{i,b}|^2|\chi_{i,c}|^2\ketbra{D_{aabc}}{D_{aabc}}+\sum_{\substack{0\leq a,b <d,\\a\neq b\neq c\neq d}} 24|\chi_{i,a}|^2|\chi_{i,b}|^2|\chi_{i,c}|^2|\chi_{i,d}|^2\ketbra{D_{abcd}}{D_{abcd}}\Bigr)~\nonumber,
\end{align}
which is easily seen to be a four-partite separable DS state. Now, to obtain the final state, we perform a convex combination of all the four-partite states obtained previously in this proof, i.e.,
\begin{equation}
    \rho=\frac{1}{d2^d}\frac{1}{||M(\rho_{DS})||_1}\left(\Bar{\rho}+\sum_{i\neq j}\rho_{ij}\right)~.
\end{equation}
By construction, the resulting state is also separable and DS, thus concluding the proof.
\end{proof}

% Produces the bibliography via BibTeX.

\end{document}